\newtheorem{remark}{Remark}
\newtheorem{proposition}{Proposition}
\newtheorem{theorem}{Theorem}
\newtheorem{corollary}{Corollary}
\newtheorem{definition}{Definition}
\begin{document}


\title{\onehalfspacing{Lossy Coding of Correlated Sources over a Multiple Access Channel: Necessary Conditions and Separation Results}}

\author{\IEEEauthorblockN{Ba\c{s}ak~G{\"u}ler\IEEEauthorrefmark{1}  \quad Deniz~G{\"u}nd{\"u}z \IEEEauthorrefmark{7}    \quad Aylin~Yener\IEEEauthorrefmark{1}} 
\vspace{-0.3cm}\singlespacing{\IEEEauthorblockA{\IEEEauthorrefmark{1} 
Department of Electrical Engineering\\
The Pennsylvania State University\\
University Park, PA \\
{\em basak@psu.edu \qquad yener@ee.psu.edu}} 
}
\vspace{-0.4cm}\singlespacing{\IEEEauthorblockA{\IEEEauthorrefmark{7}Department of Electrical and Electronic Engineering \\ Imperial College London \\
London, UK \\
{\em d.gunduz@imperial.ac.uk}} \\
}
\thanks{
The material in this paper was presented in part at the 2016 IEEE International Symposium on Information Theory (ISIT'16) and the 2017 IEEE International Symposium on Information Theory (ISIT'17).} 
\thanks{This research is sponsored in part by the U.S. Army Research Laboratory under the Network Science Collaborative Technology Alliance, Agreement Number
W911NF-09-2-0053, and by the European Research Council Starting Grant project BEACON (project number 677854).}
}

\maketitle

\vspace{-1.6cm} 

\begin{abstract}  
\vspace{-0.2cm}Lossy coding of correlated sources over a multiple access channel (MAC) is studied. First,  a joint source-channel coding scheme is presented when the decoder has correlated side information. Next, the optimality of separate source and channel coding, that emerges from the availability of a common observation at the encoders, or side information at the encoders and the decoder, is investigated. It is shown that separation is optimal when the encoders have access to a common observation whose lossless recovery is required at the decoder,  and the two sources are independent conditioned on this common observation. Optimality of separation is also proved when the encoder and the decoder have access to shared side information conditioned on which the two sources are independent. These separation results obtained in the presence of side information are then utilized to provide a set of necessary conditions for the transmission of correlated sources over a MAC without side information. 
Finally, by specializing the obtained necessary conditions to the transmission of binary and Gaussian sources over a MAC, it is shown that they can potentially be tighter than the existing results in the literature, providing a novel converse for this fundamental problem.

\end{abstract}

\section{Introduction}
\label{Sec:introduction} 
This paper considers the lossy coding of correlated discrete memoryless (DM) sources over a DM multiple access channel (MAC). 
Separate source and channel coding is known to be suboptimal for this setup in general, even when the lossless reconstruction of the sources is required \cite{cover1980multiple}. This is in contrast to the point-to-point scenario for which the separation of source and channel coding is optimal, also known as the \emph{separation theorem} \cite{Shannon}. The characterization of the achievable distortion region when transmitting correlated sources over a MAC is one of the fundamental open problems in network information theory, solved only for some special cases. 

This problem is also related to another long-standing open problem, namely the multi-terminal lossy source-coding problem, which refers to the scenario when the underlying MAC consists of two orthogonal finite-capacity error-free links. 
Despite the lack of a general single-letter characterization for the multi-terminal source coding problem, separate source and channel coding is  optimal when the underlying MAC is orthogonal \cite{xiao2007multiterminal}. Separation is also optimal when one of the sources is shared between the two encoders \cite{gunduz2007correlated}, or for the lossless case, when the decoder has access to side information conditioned on which the two sources are independent \cite{gunduz2009source}. However, due to the lack of a general separation result, the achievable distortion region is unknown even in scenarios for which the corresponding source coding problem can be solved. 

In the absence of single-letter necessary and sufficient conditions, the goal is to obtain computable inner and outer bounds. A fairly general joint source-channel coding scheme was introduced in \cite{minero2015unified} by leveraging hybrid coding. This scheme subsumes most other known coding schemes. A novel outer bound was presented in \cite{lapidoth2010sending} for the Gaussian setting, which uses the fact that the correlation among channel inputs is limited by the correlation available among source sequences. Other bounds were proposed in \cite{jain2012energy}, \cite{kang2011new}, and more recently in \cite{7541654}, \cite{yu2016distortion}. 
Optimality of source-channel separation was studied in \cite{gunduz2009source}, \cite{tian2014optimality}, and the optimality of uncoded transmission was investigated for Gaussian sources over multi-terminal Gaussian channels in \cite{tian2017matched}.

This paper studies the achievable distortion region for sending correlated sources over a MAC. In the first part of the paper, it is assumed that the encoders and/or the decoder may have access to side information correlated with the sources (see Fig.~\ref{Fig:Model1}). Initially, a joint source-channel coding scheme is proposed when side information is available only at the decoder. Then, we investigate separation theorems that emerge from the availability of a common observation at the encoders, or from the availability of side information at the encoders and the decoder. 
In doing so, we first focus on the scenario in which the encoders share a common observation conditioned on which the two sources are independent. For this setup, we show that separation is optimal when the decoder is required to recover the common observation losslessly, but can tolerate some distortion for the parts known only at a single encoder. 
Corresponding necessary and sufficient conditions are identified for the optimality of separation. 
Next, we consider the scenario in which the encoders and the decoder have access to shared side information, and show that separation is again optimal if the two sources are conditionally independent given the side information. 

In the second part of the paper, we leverage the separation theorems derived in the first part to obtain a new set of necessary conditions for the achievability of a distortion pair when transmitting correlated sources over a MAC without any side information. In particular, we obtain our computable necessary conditions by providing particular side information sequences to the encoders and the decoder to induce the optimality of separation. Based on the results of the first part, this can be achieved when the two sources are conditionally independent given the side information. Optimality of separation conditioned on the provided side information allows us to characterize the corresponding necessary conditions explicitly. Conditional independence inducing side information sequences have previously been used to obtain converse results in some multi-terminal source coding problems \cite{ozarow80, wagnerIT08}. 
In this paper, they are used to obtain converse results in a multi-terminal joint source-channel coding problem. The necessary conditions are then specialized to the case of bivariate Gaussian sources over a Gaussian MAC as well as doubly symmetric binary sources (DSBS) over a Gaussian MAC. By providing comparisons between the new necessary conditions and the known bounds in the literature, we show that the proposed technique can potentially provide tighter converse bounds than the previous results in the literature.

In the remainder of the paper,  $X$ represents a random variable, and
$x$ is its realization. $X^n=(X_1, \ldots, X_n)$ is a random vector of length $n$, and $x^n=(x_1, \ldots, x_n)$ denotes its realization. $\mathcal{X}$ is a set with 
cardinality $|\mathcal{X}|$. $\mathbb{E}[X]$ is the expected value  and $\text{var}(X)$ is the variance of $X$.

\begin{figure}[t]
\centering
\includegraphics[width=0.7\linewidth]{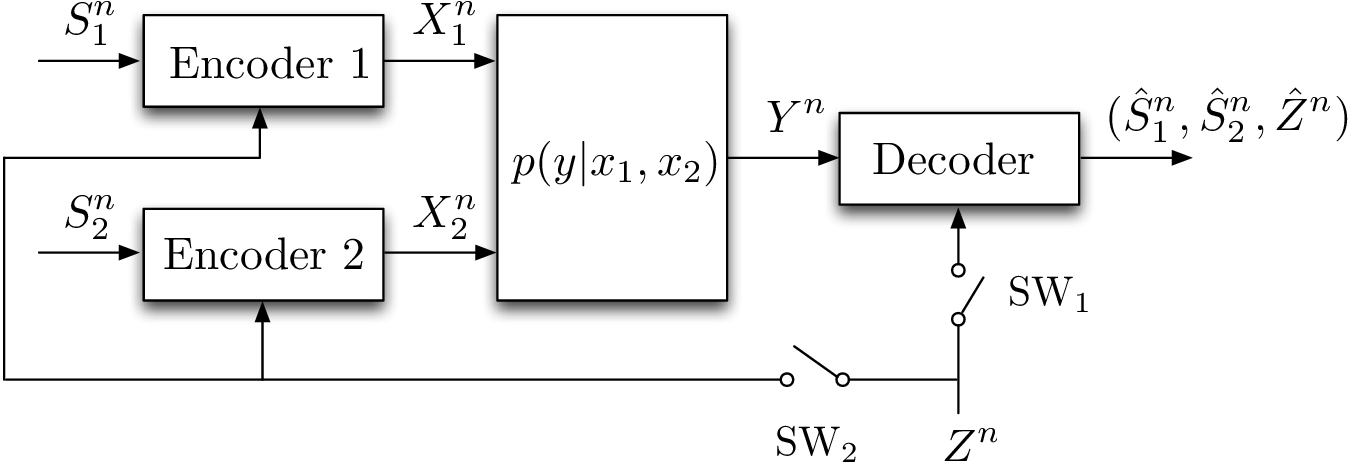}
\vspace{-0.2cm}\caption{Communication of correlated sources over a MAC.}
\label{Fig:Model1}    
\vspace{-0.5cm}\end{figure}

\section{System Model}\label{Sec:SystemModel}

We consider the transmission of DM sources $S_1$ and $S_2$ over a DM MAC as illustrated in Fig.~\!\!\ref{Fig:Model1}. 
Encoder $1$ observes $S_1^n=(S_{11}, \ldots, S_{1n})$, whereas encoder $2$ observes $S_2^n=(S_{21}, \ldots, S_{2n})$. If switch $\text{SW}_2$ in Fig.~\!\ref{Fig:Model1} is closed, the two encoders also have access to a common observation $Z^n$ correlated with $S_1^n$ and $S_2^n$. 
Encoders $1$ and $2$ map their observations to the channel inputs $X_1^n$ and $X_2^n$, respectively. The channel is  characterized by the conditional distribution $p(y|x_1, x_2)$. If switch $\text{SW}_1$ in Fig.~\!\ref{Fig:Model1} is closed, the decoder has access to side information $Z^n$.  
Upon observing the channel output $Y^n$ and side information $Z^n$ whenever it is available, the decoder constructs the estimates $\hat{S}_1^n$, $\hat{S}_2^n$, and $\hat{Z}^n$. Corresponding average distortion values for the source sequence $\hat{S}_j^n$, $j=1,2$, is given by 
\begin{equation}
\Delta^{(n)}_j = \frac{1}{n}\sum_{i=1}^n\mathbb{E}[d_j(S_{ji}, \hat{S}_{ji})], 
\end{equation}
where $d_j(\cdot, \cdot)< \infty$ is the distortion measure for source $S_j^n$.  
A distortion pair $(D_1, D_2)$ is \emph{achievable} for the source pair $(S_1, S_2)$ and channel $p(y|x_1, x_2)$ if there exists a sequence of encoding and decoding functions such that 
\vspace{-0.15cm}\begin{equation}
\limsup_{n\rightarrow \infty} \Delta^{(n)}_j \leq D_j, \quad j=1,2,
\vspace{-0.1cm}\end{equation}
and $P(Z^n\neq \hat{Z}^n)\rightarrow 0$ as $n\rightarrow \infty$ when at least one of the switches is closed. 
Random variables $S_1$, $S_2$, $Z$, $X_1$, $X_2$, $Y$, $\hat{S}_1$, $\hat{S}_2$, $\hat{Z}$  
are defined over the corresponding alphabets $\mathcal{S}_1$, $\mathcal{S}_2$, $\mathcal{Z}$, $\mathcal{X}_1$, $\mathcal{X}_2$, $\mathcal{Y}$, $\hat{\mathcal{S}}_1$, $\hat{\mathcal{S}}_2$, $\hat{\mathcal{Z}}$. 
Note that, when switch $\text{SW}_1$ is closed, error probability in decoding $Z^n$ becomes irrelevant since it is readily available at the decoder, and serves as side information.

Throughout the paper, we use the following definitions extensively. 
\begin{definition}{(Conditional rate distortion function)\cite{gray1972conditional}}
Given correlated random variables $S$ and $U$, define the minimum average distortion for $S$ given $U$ as  \cite{gunduz2007correlated}, \cite{shamai1998systematic}:
\vspace{-0.1cm}\begin{equation} \label{func1}
\mathcal{E}(S|U)=\inf_{f:\mathcal{U}\rightarrow \hat{\mathcal{S}}} E[d(S, f(U))], 
\end{equation}
where the minimum is over all functions $f(\cdot)$ from $\mathcal{U}$ to the reconstruction alphabet $\hat{\mathcal{S}}$. Then, the conditional rate distortion function for source $S$ when correlated side information $Z$ is shared between the encoder and the decoder is given by,
\begin{equation}\label{func2}
R_{S|Z} (D) = \min_{\substack{p(u| s, z):\\\mathcal{E}(S|U, Z)\leq D}} I(S; U|Z), 
\end{equation}
where the minimum is over all conditional distributions $p(u|s,z)$ such that the minimum average distortion for $S$ given $U$ and $Z$ is less than or equal to $D$.
\end{definition}

\begin{definition}{(G{\'a}cs-K{\"o}rner common information)\cite{gacs1973common}}\label{GKW}
Define the function $f_j: \mathcal{S}_j\rightarrow \{1, \ldots, k\}$ for $j=1,2$, with the largest integer $k$ such that $P(f_j(S_j)=u_0)>0$ for $u_0\in \{1, \ldots, k\}$, $j=1,2$, and $P(f_1(S_1)=f_2(S_2)) = 1$. Then, $U_0=f_1(S_1)=f_2(S_2)$ is defined as the common part between $S_1$ and $S_2$, and the G{\'a}cs-K{\"o}rner common information is given by
\begin{equation}
C_{GK}(S_1, S_2)=H(U_0) \label{GK}.
\end{equation} 
\end{definition}

  \begin{definition}{(Wyner's common information)\cite{wyner1975common}}
 Wyner's common information between $S_1$ and $S_2$ is defined as,
\begin{equation} \label{commoninfo}
 C_{W}(S_1, S_2) = \min_{\substack{p(v|s_1, s_2) \\ S_1-V-S_2}} I(S_1, S_2; V).
\end{equation}
 \end{definition}

\section{Joint Source-Channel Coding with Decoder Side Information}\label{joint}
We first assume that only $\text{SW}_1$ is closed in Fig.~\!\ref{Fig:Model1}, and present a general achievable scheme for the lossy coding of correlated sources in the presence of decoder side information.

\begin{theorem}\label{lemma:hybrid}
When sending correlated DM sources $S_1$ and $S_2$ over a DM MAC with $p(y|x_1, x_2)$ and decoder side information $Z$, distortion pair $(D_1, D_2)$ is achievable if there exists a joint distribution $p(u_1, u_2, s_1, s_2, z)=p(u_1|s_1)p(u_2|s_2)p(s_1, s_2, z)$, and functions $x_j(u_j, s_j)$, $g_j(u_1, u_2, y,z)$ for $j=1,2$, such that
\begin{align}
I(U_1; S_1|U_2, Z)&< I(U_1;Y|U_2, Z) \label{eq4n}\\
I(U_2; S_2|U_1, Z)&< I(U_2;Y|U_1, Z) \label{eq5n}\\
I(U_1, U_2; S_1, S_2| Z)&<I(U_1, U_2;Y|Z) \label{eq6n}
\end{align}
and $\mathbb{E}[d_j(S_j, g_j(U_1, U_2, Y, Z))]\leq D_j$ for $j=1,2$. 
\end{theorem}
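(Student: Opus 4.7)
The plan is to build a hybrid joint source-channel coding scheme. I would fix a joint distribution $p(u_1|s_1)p(u_2|s_2)p(s_1,s_2,z)$ and functions $x_j(u_j,s_j)$, $g_j(u_1,u_2,y,z)$ satisfying the hypotheses of the theorem, then pick auxiliary rates $R_1,R_2$ obeying
\begin{align*}
R_j&>I(U_j;S_j),\quad j=1,2,\\
R_1&<I(U_1;U_2,Y,Z),\qquad R_2<I(U_2;U_1,Y,Z),\\
R_1+R_2&<I(U_1;U_2)+I(U_1,U_2;Y,Z).
\end{align*}
For each $j$, I would generate a codebook $\{u_j^n(m_j):m_j\in[1{:}2^{nR_j}]\}$ with i.i.d.\ $p(u_j)$ entries. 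Encoder $j$, observing $s_j^n$, uses joint typicality to pick an index $m_j$ with $(s_j^n,u_j^n(m_j))$ typical (succeeding with high probability by the covering lemma, since $R_j>I(U_j;S_j)$), and transmits $x_{ji}=x_j(u_{ji}(m_j),s_{ji})$ symbol-by-symbol. The decoder, given $(y^n,z^n)$, searches for a unique pair $(\hat m_1,\hat m_2)$ for which $(u_1^n(\hat m_1),u_2^n(\hat m_2),y^n,z^n)$ is jointly typical under $p(u_1,u_2,y,z)$, and outputs $\hat s_{ji}=g_j(u_{1i}(\hat m_1),u_{2i}(\hat m_2),y_i,z_i)$.

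The decoding error probability is controlled by the packing lemma applied to three events. If only $\tilde m_j\neq m_j$, the independent random codeword $U_j^n(\tilde m_j)$ is jointly typical with $(U_{\bar j}^n(m_{\bar j}),y^n,z^n)$ with probability about $2^{-nI(U_j;U_{\bar j},Y,Z)}$, yielding $R_j<I(U_j;U_{\bar j},Y,Z)$. If both indices are wrong, the two independently drawn codewords land in the typical set of $p(u_1,u_2,y,z)$ with probability about $2^{-n[I(U_1;U_2)+I(U_1,U_2;Y,Z)]}$; the extra $I(U_1;U_2)$ term appears because the codebooks are independently generated whereas the target joint distribution is correlated. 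Next I would apply Fourier--Motzkin to eliminate $R_1,R_2$. Using the Markov chains $U_1-S_1-(U_2,S_2,Z)$ and $U_2-S_2-(U_1,S_1,Z)$ implicit in the factorization, the identities $I(U_j;U_{\bar j},Z)=I(U_j;S_j)-I(U_j;S_j|U_{\bar j},Z)$ and $I(U_1;S_1)+I(U_2;S_2)-I(U_1;U_2)-I(U_1,U_2;Z)=I(U_1,U_2;S_1,S_2|Z)$ reduce the surviving constraints to exactly \eqref{eq4n}--\eqref{eq6n}.

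Finally, on the error-free event the tuple $(S_j^n,U_1^n,U_2^n,Y^n,Z^n)$ is jointly typical with the induced distribution, so the typical-average lemma brings $\frac{1}{n}\sum_i d_j(S_{ji},g_j(U_{1i},U_{2i},Y_i,Z_i))$ within a vanishing constant of $\mathbb{E}[d_j(S_j,g_j(U_1,U_2,Y,Z))]\le D_j$; adding the error-event contribution (bounded by the maximum per-letter distortion times an $o(1)$ probability, under the usual boundedness assumption on $d_j$) delivers the distortion guarantee. The main technical care is in the error analysis itself: because $X_j=x_j(U_j,S_j)$ depends jointly on codeword and source, one cannot cleanly separate source-coding and channel-coding errors, and the joint distributions entering the packing calculations, especially the correlated target $p(u_1,u_2,y,z)$ contrasted against independent random codewords, must be tracked simultaneously. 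It is precisely this bookkeeping that produces the $I(U_1;U_2)$ correction and drives the conditional form of the final three inequalities.
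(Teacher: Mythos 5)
Your proposal follows essentially the same route as the paper: the same hybrid coding construction (i.i.d.\ $U_j$-codebooks, joint-typicality encoding via the covering lemma, symbol-by-symbol maps $x_j(u_j,s_j)$, joint-typicality decoding over $(u_1^n,u_2^n,y^n,z^n)$), the same intermediate rate constraints $R_j>I(U_j;S_j)$, $R_j<I(U_j;U_{\bar j},Y,Z)$, $R_1+R_2<I(U_1;U_2)+I(U_1,U_2;Y,Z)$, and the same Fourier--Motzkin/Markov-chain algebra to arrive at \eqref{eq4n}--\eqref{eq6n}. The identities you quote for collapsing the constraints are precisely the ones used in Appendix~\ref{appendix0}.

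One caution worth flagging so the argument is airtight: you describe the error analysis as an application of the ``packing lemma,'' but the standard packing lemma does not directly apply here because the transmitted channel input $X_j^n=x_j(U_j^n(M_j),S_j^n)$ depends on both the selected codeword and the source, so the unselected codewords are not independent of $(Y^n,Z^n)$ conditioned on the encoding event. The paper handles this by conditioning on $\mathcal{M}=\{M_1=M_2=1\}$, invoking Lemma~1 of \cite{minero2015unified} to show that the conditional law of an unselected codeword is within a $(1+\epsilon)$ factor of $\prod_i p_{U_j}(u_{ji})$, and explicitly bounding the typicality sums. In addition, establishing that the \emph{selected} codewords are jointly typical with $(S_1^n,S_2^n,Y^n,Z^n)$ (the paper's event $\mathcal{E}_3$) requires a careful cascaded application of the Markov lemma along $U_2-S_2-S_1Z$, $U_1-S_1-S_2U_2Z$, $Y-U_1U_2S_1S_2-Z$, which your writeup glosses over. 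You clearly identify that this dependence bookkeeping is the crux, so the gap is one of detail rather than idea, but those two lemmas are what actually make the probability estimates go through.
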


\begin{proof}
Our achievable scheme builds upon the hybrid coding framework of  \cite{minero2015unified}, by generalizing it to the case with decoder side information. The detailed proof is available in Appendix~\ref{appendix0}. 
\end{proof}

\section{Separation Theorems}\label{section3} 
We now focus on the conditions under which separation is optimal for lossy coding of correlated sources over a MAC. 
For the remainder of this section, we assume that $S_1$ and $S_2$ are independent conditioned on $Z$, i.e., the Markov condition $S_1-Z-S_2$ holds.

\subsubsection{Separation in the Presence of Common Observation}

Here, we assume that only switch $\text{SW}_2$ in Fig.~\!\ref{Fig:Model1} is closed, and show the optimality of separation if the lossless reconstruction of the common observation $Z$ is required.

\begin{theorem} \label{lemma3}
Consider the communication of correlated sources $S_1$, $S_2$, and $Z$,  where $Z$ is observed by both encoders. If $S_1-Z-S_2$ holds, then separation is optimal, and $(D_1, D_2)$ is achievable if 
\begin{align}
R_{S_1|Z} (D_1) &< I(X_1;Y|X_2, W) \label{common1}\\
R_{S_2|Z} (D_2) &< I(X_2;Y|X_1, W)\label{common2}\\
R_{S_1|Z} (D_1)+R_{S_2|Z} (D_2) &< I(X_1, X_2;Y| W)\label{common3}\\
H(Z) +R_{S_1|Z} (D_1)+R_{S_2|Z} (D_2) &< I(X_1, X_2;Y) \label{common4}
\end{align}
for some $p(x_1, x_2, y, w)=p(y|x_1, x_2) p(x_1|w) p(x_2|w)p(w)$.

Conversely, if a distortion pair $(D_1, D_2)$ is achievable, then \eqref{common1}-\eqref{common4} must hold with $<$ replaced with $\leq$.
\end{theorem}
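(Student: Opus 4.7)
The plan is to prove achievability and converse separately, both exploiting the fact that $Z^n$ is observed by both encoders and therefore plays the role of a common message for the MAC. For achievability I would design a separate source--channel scheme: both encoders jointly produce a single lossless description of $Z^n$ at rate $H(Z)+\epsilon$ (by indexing the $Z$-typical set), and each encoder $j$ independently produces a conditional rate--distortion description of $S_j^n$ given $Z^n$ at rate $R_{S_j|Z}(D_j)+\epsilon$. These three indices are then transmitted over the DM-MAC using a capacity-achieving scheme for the MAC with a common message, in which the $Z^n$-index is the common message of rate $R_0 = H(Z)$ and each source description is a private message of rate $R_j = R_{S_j|Z}(D_j)$. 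The known capacity region for the MAC with common message consists exactly of the triples $(R_0, R_1, R_2)$ satisfying $R_j < I(X_j;Y\mid X_{\neg j}, W)$, $R_1+R_2 < I(X_1,X_2;Y\mid W)$, and $R_0+R_1+R_2 < I(X_1,X_2;Y)$ for some distribution $p(w)p(x_1|w)p(x_2|w)p(y|x_1,x_2)$, which is precisely \eqref{common1}--\eqref{common4}. The decoder first recovers $Z^n$ losslessly and then reconstructs $\hat S_j^n$ by inverting the conditional RD codebook using $(M_j, Z^n)$, with the usual typicality arguments verifying the distortion constraints.

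For the converse, let $\delta_n\to 0$ with $H(Z^n\mid Y^n)\le n\delta_n$ (Fano). The conditional rate--distortion lower bound gives $nR_{S_j|Z}(D_j) \leq I(S_j^n;\hat S_j^n\mid Z^n) \leq I(S_j^n;Y^n\mid Z^n)$, where the last step is data processing since $\hat S_j^n$ is a function of $Y^n$. I would then invoke the hypothesis $S_1-Z-S_2$ (which extends to $S_1^n\perp S_2^n\mid Z^n$ for i.i.d.\ sources) together with $X_j^n$ being a deterministic function of $(S_j^n,Z^n)$ to upgrade the bounds: $I(S_1^n;X_2^n\mid Z^n)=0$ yields $I(S_1^n;Y^n\mid Z^n)\leq I(S_1^n;Y^n\mid X_2^n,Z^n)\leq I(X_1^n;Y^n\mid X_2^n,Z^n)$, symmetrically for $j=2$, and the sum of these two individual bounds is in turn upper bounded by $I(X_1^n,X_2^n;Y^n\mid Z^n)$. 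For the sum bound containing $H(Z)$, I would combine Fano with the chain rule via
\begin{equation*}
nH(Z)+I(X_1^n,X_2^n;Y^n\mid Z^n) \;=\; H(Z^n,Y^n)-H(Y^n\mid X_1^n,X_2^n) \;\leq\; I(X_1^n,X_2^n;Y^n)+n\delta_n,
\end{equation*}
using the fact that the memoryless channel makes $Y^n$ conditionally independent of $Z^n$ given $(X_1^n, X_2^n)$.

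Single-letterization is then performed with a time-sharing variable $Q$ uniform on $\{1,\ldots,n\}$ and the auxiliary $W=(Z^n,Q)$, setting $X_j = X_{jQ}$ and $Y=Y_Q$. The Markov decomposition $p(x_1|w)p(x_2|w)$ is verified because, given $W$, $X_{1Q}$ is a function of $S_1^n$ alone while $X_{2Q}$ is a function of $S_2^n$ alone, and $S_1^n\perp S_2^n\mid Z^n$. Memorylessness supplies $p(y|x_1,x_2,w)=p(y|x_1,x_2)$, and the chain rule converts $I(X_1^n;Y^n\mid X_2^n,Z^n)\leq nI(X_1;Y\mid X_2,W)$ (with analogous statements for the symmetric and joint conditional bounds) and $I(X_1^n,X_2^n;Y^n)\leq nI(X_1,X_2;Y)$ under the induced marginal joint distribution, where the last step uses $I(X_{1Q},X_{2Q};Y_Q\mid Q)\leq I(X_{1Q},X_{2Q};Y_Q)$ since $Q\to(X_{1Q},X_{2Q})\to Y_Q$. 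I expect the main obstacle to be ensuring that this single choice of $W=(Z^n,Q)$ simultaneously delivers the correct $p(x_1|w)p(x_2|w)$ factorization for the three $W$-conditioned inequalities \eqref{common1}--\eqref{common3} \emph{and} the unconditioned marginal required by \eqref{common4}, together with the standard cardinality-reduction argument via the support lemma needed to replace $W$ by a finite-alphabet auxiliary; the remainder is routine Fano and rate--distortion bookkeeping.
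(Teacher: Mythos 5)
Your proof is correct. The converse is essentially the paper's argument, modulo presentation: same choice of auxiliary $W=(Z^n,Q)$, same use of Fano's inequality for the $H(Z)$ term, same data-processing and Markov chain observations (notably $S_1^n - Z^n - X_2^n$ to justify dropping the $X_2^n$ conditioning, and $X_{1i}-Z^n-X_{2i}$ to verify the $p(x_1|w)p(x_2|w)$ factorization). Your compact identity $nH(Z)+I(X_1^n,X_2^n;Y^n\mid Z^n)=H(Z^n,Y^n)-H(Y^n\mid X_1^n,X_2^n)$ is a tidy packaging of the chain the paper spells out step by step; the paper also performs the per-symbol single-letterization with auxiliaries $U_{ji}=(Y^n,S_j^{i-1},Z_i^c)$ rather than invoking the $n$-letter conditional rate--distortion bound as a black box, but the underlying argument is the same.

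The achievability is where you genuinely diverge. The paper invokes Theorem 1 of Wagner et al.~for distributed source coding with a common observation, then works through several pages of manipulations (the Gastpar-style argument showing $\hat S_j$ can be taken a function of $(Z,U_j)$ alone, and the Markov simplifications of the Berger--Tung region) to reduce that general region to the conditional rate--distortion functions. Your construction sidesteps this entirely: since $Z^n$ is available to both encoders, they can agree on a deterministic lossless index for it at rate $H(Z)$, and given the decoder recovers $Z^n$, each encoder may independently use a conditional rate--distortion code for $S_j^n$ given $Z^n$; the three indices are then carried by a Slepian--Wolf-style MAC code with common message. This is more elementary and more transparent, and it arrives at exactly the same region \eqref{common1}--\eqref{common4}. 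What the paper's route buys is a single framework that also handles the case where $Z$ is only partially recovered or where the auxiliary structure is not degenerate; for the present theorem, however, your direct construction is preferable. The concerns you flag at the end (consistency of a single $W$ across all four inequalities, and cardinality reduction) are legitimate but routine, and the paper handles the first exactly as you anticipate.
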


\begin{proof} 
We provide a detailed proof in Appendix~\ref{appendixB}. 
\end{proof}

\begin{corollary}
A special case of Theorem~\ref{lemma3} is the transmission of two correlated sources over a MAC with one distortion criterion, when one source is available at both encoders as considered in \cite{gunduz2007correlated}, which corresponds to $S_2$ being a constant in Theorem~\ref{lemma3}. 
\end{corollary}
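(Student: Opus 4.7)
The plan is to verify the corollary by direct specialization: substitute $S_2 \equiv \text{const}$ into Theorem~\ref{lemma3} and show that the resulting achievability/converse conditions reduce exactly to those established in \cite{gunduz2007correlated} for transmitting a common source $Z$ losslessly together with a private source $S_1$ under a single distortion criterion over a DM-MAC.

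First I would check that all hypotheses of Theorem~\ref{lemma3} are trivially satisfied in the specialization. Since a constant is independent of any random variable, the Markov chain $S_1 - Z - S_2$ collapses to the trivial statement that $S_1$ and a constant are conditionally independent given $Z$, which always holds. Encoder~$1$ observes $(S_1, Z)$ (its own source plus the common source), and Encoder~$2$ observes only $Z$ (since $S_2$ is degenerate), matching the Gündüz--Erkip configuration in which one source is available at both encoders. The distortion requirement on $S_2$ becomes vacuous for any $D_2 \geq 0$.

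Next I would simplify each of the four conditions \eqref{common1}--\eqref{common4}. The conditional rate-distortion function $R_{S_2|Z}(D_2)$ reduces to $\min_{p(u_2|s_2,z)} I(S_2; U_2|Z) = 0$ because a constant $S_2$ carries no information beyond $Z$; one may take $U_2$ constant. Consequently \eqref{common2} becomes $0 < I(X_2; Y|X_1, W)$, which is a standard non-triviality condition that is subsumed by the remaining inequalities (or may be dropped by choosing $X_2$ to carry useful cooperation information). The remaining three conditions reduce to
\begin{align}
R_{S_1|Z}(D_1) &< I(X_1; Y|X_2, W), \notag\\
R_{S_1|Z}(D_1) &< I(X_1, X_2; Y|W), \notag\\
H(Z) + R_{S_1|Z}(D_1) &< I(X_1, X_2; Y), \notag
\end{align}
for some $p(w)p(x_1|w)p(x_2|w)p(y|x_1,x_2)$, which is precisely the achievable region of \cite{gunduz2007credit}—sorry, \cite{gunduz2007correlated}—for common-information-aided lossy transmission of a private source with simultaneous lossless delivery of the common part. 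The auxiliary $W$ plays the role of the time-sharing/cooperation variable shared by both encoders through their common observation of $Z$.

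Finally I would note that the converse direction specializes in the same manner: the converse chain in Theorem~\ref{lemma3} (with $<$ replaced by $\leq$) applied to a constant $S_2$ yields the matching outer bound of \cite{gunduz2007correlated}. The only subtle point—hardly an obstacle—is verifying that the converse in Theorem~\ref{lemma3} does not implicitly rely on $S_2$ being non-degenerate; inspection of its statement shows it does not, so the specialization is immediate. Thus Theorem~\ref{lemma3} strictly contains the result of \cite{gunduz2007correlated} as the degenerate case $|\mathcal{S}_2| = 1$.
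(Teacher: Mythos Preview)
Your proposal is correct and follows the same approach implicit in the paper, which states this corollary without proof as an immediate specialization. The only cosmetic issues are the stray self-correction ``\cite{gunduz2007credit}---sorry, \cite{gunduz2007correlated}'' which should be cleaned up, and the slightly informal treatment of \eqref{common2}: when $R_{S_2|Z}(D_2)=0$ the strict inequality $0<I(X_2;Y|X_1,W)$ is not literally ``subsumed'' by the other inequalities, but is handled by the usual closure argument (the converse carries $\leq$, and achievability approaches the boundary by continuity over the input distribution). Neither point affects the substance of the argument.
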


A related scenario is when the two sources share a common part in the sense of of G{\'a}cs-K{\"o}rner. 
The following result states that, in accordance with Theorem~\ref{lemma3}, if the two sources are independent when conditioned on the G{\'a}cs-K{\"o}rner common part, then separate source and channel coding is optimal if lossless reconstruction of the common part is required.

\begin{corollary}\label{corollary1new}  
Consider the transmission of correlated sources $S_1$ and $S_2$ with a common part $U_0=f_1(S_1)=f_2(S_2)$ from Definition~\ref{GKW}. 
If $S_1-U_0-S_2$ and the common part $U_0$ of $S_1$ and $S_2$  is to be recovered losslessly,  
then, separate source and channel coding is optimal. 
\end{corollary}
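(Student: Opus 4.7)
The plan is to reduce Corollary~\ref{corollary1new} directly to Theorem~\ref{lemma3} by identifying the G\'acs--K\"orner common part $U_0$ with the common observation $Z$ that appears in that theorem. First, I would invoke Remark~\ref{independence}: since we assume $C_{GK}(S_1,S_2)=C_W(S_1,S_2)$, there exists a random variable $U_0$ that simultaneously (i) is the common part of $S_1$ and $S_2$ in the sense of Definition~\ref{GKW}, so that $U_0=f_1(S_1)=f_2(S_2)$ for deterministic $f_1,f_2$, and (ii) satisfies the Markov chain $S_1-U_0-S_2$.

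Next, I would observe that because $U_0=f_j(S_j)$ for $j=1,2$, each encoder can compute $U_0^n$ pointwise from its own source sequence $S_j^n$ without any additional side information. Consequently, the present setup is operationally identical to the model of Theorem~\ref{lemma3} in which only switch $\mathrm{SW}_2$ is closed, with the common observation playing the role of $Z \leftarrow U_0$. The Markov chain required by Theorem~\ref{lemma3} holds by construction, and the hypothesis that the common part is to be reconstructed losslessly at the decoder matches the lossless-recovery requirement on $Z$ in that theorem.

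Applying Theorem~\ref{lemma3} with $Z \leftarrow U_0$ immediately yields the achievability and converse statements: a distortion pair $(D_1,D_2)$ is achievable if and only if there exists $p(x_1,x_2,y,w)=p(y|x_1,x_2)p(x_1|w)p(x_2|w)p(w)$ under which
\begin{align*}
R_{S_1|U_0}(D_1) &\;\lessgtr\; I(X_1;Y|X_2,W),\\
R_{S_2|U_0}(D_2) &\;\lessgtr\; I(X_2;Y|X_1,W),\\
R_{S_1|U_0}(D_1)+R_{S_2|U_0}(D_2) &\;\lessgtr\; I(X_1,X_2;Y|W),\\
H(U_0)+R_{S_1|U_0}(D_1)+R_{S_2|U_0}(D_2) &\;\lessgtr\; I(X_1,X_2;Y),
\end{align*}
with strict inequality for achievability and $\leq$ for the converse. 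In particular, separation is optimal.

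There is no substantive technical obstacle; the only point requiring care is the converse direction, where one must check that an arbitrary coding scheme for the original problem (which does not a priori treat $U_0$ as an auxiliary variable) is still covered by the converse of Theorem~\ref{lemma3}. This is immediate because $U_0$ is a deterministic function of each $S_j$, so any encoder map $e_j:\mathcal{S}_j^n\to\mathcal{X}_j^n$ in the original problem is also a valid encoder map in the Theorem~\ref{lemma3} problem with $Z=U_0$; hence the outer bound transfers verbatim and the corollary follows.
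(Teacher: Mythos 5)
Your proposal is correct and is essentially identical to the paper's own argument: both invoke Remark~\ref{independence} to obtain the Markov chain $S_1-U_0-S_2$, note that each encoder can compute $U_0$ deterministically from its own source, and then substitute $Z\leftarrow U_0$ into Theorem~\ref{lemma3}. Your extra remark on why the converse transfers (an encoder that ignores $U_0^n$ is a special case of one with access to it) is a correct and slightly more explicit justification of a step the paper leaves implicit.
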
 
\begin{proof}
From Definition~\ref{GKW}, the two encoders can separately reconstruct $U_0$. The result then follows by letting $Z\leftarrow U_0$ in Theorem~\ref{lemma3}. 
\end{proof}

\subsubsection{Separation in the Presence of Shared Encoder-Decoder Side Information} 
We next assume that both switches in Fig.~\!\ref{Fig:Model1} are closed, and show the optimality of separation if the two sources are independent given the side information that is shared between the encoders and the decoder. 

\begin{theorem}\label{lemma2}
Consider communication of two correlated sources $S_1$ and $S_2$ with side information $Z$ shared between the encoders and the decoder. If $S_1-Z-S_2$ holds, 
then separation is optimal, and $(D_1, D_2)$ is achievable if 
\begin{align}
R_{S_1|Z} (D_1)&< I(X_1; Y|X_2, Q) \label{lemm2cond1}\\
R_{S_2|Z} (D_2)&< I(X_2; Y|X_1, Q) \label{lemm2cond2} \\
R_{S_1|Z} (D_1) + R_{S_2|Z} (D_2)&< I(X_1, X_2; Y|Q) \label{lemm2cond3}
\end{align} 
for some $p(x_1, x_2, y,q)=p(y|x_1, x_2) p(x_1|q) p(x_2|q)p(q)$. 

Conversely, for any achievable $(D_1, D_2)$ pair, \eqref{lemm2cond1}-\eqref{lemm2cond3} must hold with $<$ replaced with $\leq$.
\end{theorem} 

\begin{proof} 
See Appendix~\ref{appendixA}. 
\end{proof}

When side information $Z$ is available only at the decoder, i.e., when only switch $\text{SW}_1$ is closed, separation is known to be optimal for the lossless transmission of sources $S_1$ and $S_2$ whenever $S_1-Z-S_2$ \cite{gunduz2009source}.  
In light of Theorem~\ref{lemma2}, we show that a similar result holds for the lossy case  whenever 
the Wyner-Ziv rate distortion function of each source is equal to its conditional rate distortion function.

\begin{corollary}\label{lemma2old}
Consider the communication of correlated sources $S_1$ and $S_2$ with decoder only side information $Z$.  If 
\begin{equation}
R_{S_j|Z} (D_j)=R^{\text{WZ}}_{S_j|Z} (D_j), \label{secondcondition}
\end{equation}
where
\begin{equation}
R^{\text{WZ}}_{S_j|Z} (D_j) \triangleq \min_{\substack{p(u_j|s_j), g(u_j, z):\\ \mathbb{E}[d_j(S_j, g(U_j, Z))]\leq D_j\\U_j-S_j-Z}} I(S_j;U_j|Z)  \text{ for } j=1,2, \notag
\end{equation}
is the (Wyner-Ziv) rate distortion function of $S_j$ with decoder-only  side information $Z$ \cite{wyner1976rate}, and  $S_1-Z-S_2$ form a Markov chain, then separation is optimal, with the necessary and sufficient conditions in \eqref{lemm2cond1}-\eqref{lemm2cond3}.  
\end{corollary} 
\begin{proof}
Corollary~\ref{lemma2old} follows from the fact that whenever \eqref{secondcondition} holds, conditional rate distortion functions in Theorem~\ref{lemma2} are achievable by relying on decoder side information only. 
\end{proof}
We note that Gaussian sources are an example for  \eqref{secondcondition}. 
\begin{remark}\label{subopt}
We would like to note that the optimality/sub-optimality of separation for the case of decoder-only side information conditioned on which the two sources are independent is open in general. In addition to the setting in Corollary~\ref{lemma2old}, the optimality of separation holds also for lossless reconstruction \cite{gunduz2009source}. 
\end{remark}

Lastly, we consider the transmissibility of correlated sources with a common part when the common part is available at the decoder. The following result states that if the two sources are independent when conditioned on the G{\'a}cs-K{\"o}rner common part, separation is again optimal if the decoder has access to the common part. 

\begin{corollary}\label{corollary1}  
Consider the transmission of sources $S_1$ and $S_2$ with a common part $U_0=f_1(S_1)=f_2(S_2)$ from Definition~\ref{GKW}. 
Then, separation is optimal if $S_1-U_0-S_2$ 
and the common part $U_0$ is available at the decoder. 
\end{corollary}
\begin{proof} 
Since both encoders can extract $U_0$ individually, each source can achieve the corresponding conditional rate distortion function. Corollary~\ref{corollary1} then follows from Theorem~\ref{lemma2} by letting $Z\leftarrow U_0$. 
\end{proof}
In the following, we leverage these separation results to obtain necessary conditions for the lossy coding of correlated sources over a MAC without side information.

\section{Necessary Conditions for Transmitting Correlated Sources over a MAC} 
We consider in this section the lossy coding of correlated sources over a MAC when both switches in Fig. \ref{Fig:Model1} are open; see Fig.~\ref{Fig:SystemModel100}. We provide necessary conditions for the achievability of a distortion pair $(D_1, D_2)$ using our results from Section~\ref{section3}. This will be achieved by providing correlated side information to the encoders and the decoder, conditioned on which the two sources are independent. From Theorem~\ref{lemma2}, separation is optimal in this setting, and the corresponding necessary and sufficient conditions for the achievability of a distortion pair serve as necessary conditions for the original problem. Corresponding necessary conditions are presented in Theorem~\ref{Thm:Necessary} below.

\begin{theorem}\label{Thm:Necessary} 
Consider the communication of correlated sources $S_1$ and $S_2$ over a MAC.
If a distortion pair $(D_1, D_2)$ is achievable, then for every $Z$ satisfying the Markov condition $S_1-Z-S_2$, we have 
\begin{align}
R_{S_1|Z} (D_1)&\leq I(X_1; Y|X_2, Q), \label{lemm2cond1n}\\
R_{S_2|Z} (D_2)&\leq I(X_2; Y|X_1, Q), \label{lemm2cond2n} \\
R_{S_1|Z} (D_1)  + R_{S_2|Z} (D_2)   &\leq I(X_1, X_2;Y|Q), \label{necesconst4} \\
R_{S_1 S_2} (D_1, D_2)  &\leq I(X_1, X_2;Y), \label{necesconst5}
\end{align}
for some $Q$ for which $X_1-Q-X_2$ form a Markov chain, where 
\begin{equation}
R_{S_1 S_2} (D_1, D_2) =\min_{\substack{p(\hat{s}_1, \hat{s}_2|s_1, s_2)\\ \mathbb{E}[d_1(S_1, \hat{S}_1)]\leq D_1 \\ \mathbb{E}[d_2(S_2, \hat{S}_2)]\leq D_2}} I(S_1, S_2;\hat{S}_1, \hat{S}_2) \notag
\end{equation} 
is the rate distortion function of the joint source $(S_1, S_2)$ with target distortions $D_1$ and $D_2$ for sources $S_1$ and $S_2$, respectively. 

\end{theorem}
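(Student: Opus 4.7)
The plan is to derive the first three conditions by a genie-aided reduction to Theorem~\ref{lemma2}, and to establish \eqref{necesconst5} separately via a standard joint source-channel coding converse.

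The key observation is that supplying correlated side information to both encoders and the decoder cannot shrink the achievable distortion region. Concretely, fix any $Z$ with $S_1-Z-S_2$, and for each index $i$ draw $Z_i$ independently from $p(z|S_{1i},S_{2i})$, so that $(S_{1i},S_{2i},Z_i)$ is i.i.d.\ under $p(s_1,s_2,z)$. Any scheme achieving $(D_1,D_2)$ without side information continues to achieve $(D_1,D_2)$ when this $Z^n$ is supplied to both encoders and the decoder, since the sequence may simply be ignored. Because the Markov condition $S_1-Z-S_2$ holds for this construction, Theorem~\ref{lemma2} applies, and its converse immediately yields \eqref{lemm2cond1n}--\eqref{necesconst4} with some $Q$ satisfying $X_1-Q-X_2$.

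For \eqref{necesconst5}, I would argue directly from the original system, using the definition of the joint rate-distortion function together with the fact that $(\hat{S}_1^n,\hat{S}_2^n)$ meet the target distortions:
\[
n\,R_{S_1 S_2}(D_1,D_2) \leq I(S_1^n,S_2^n;\hat{S}_1^n,\hat{S}_2^n).
\]
The data processing inequality along the chain $(S_1^n,S_2^n)-(X_1^n,X_2^n)-Y^n-(\hat{S}_1^n,\hat{S}_2^n)$ bounds this by $I(X_1^n,X_2^n;Y^n)$, and standard single-letterization using memorylessness of the MAC together with concavity of mutual information in the input distribution yields $n\,I(X_1,X_2;Y)$, where the marginal $p(x_1,x_2)$ is the one induced by a uniform time-sharing variable.

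The main subtlety is reconciling the auxiliary variable $Q$ across the four conditions. Conditions \eqref{lemm2cond1n}--\eqref{necesconst4} come with an explicit $Q$ from the converse of Theorem~\ref{lemma2}, whereas \eqref{necesconst5} depends only on the marginal $p(x_1,x_2)$, which is automatically consistent with any factorization $p(q)p(x_1|q)p(x_2|q)$ having that marginal; hence a single $Q$ serves all four conditions simultaneously. A minor technical point worth verifying is that the genie-aided $Z^n$ is memoryless, as required for Theorem~\ref{lemma2} to apply as stated; this is guaranteed by the i.i.d.\ construction from the single-letter distribution $p(s_1,s_2,z)$ whose $(s_1,s_2)$-marginal coincides with the true source law.
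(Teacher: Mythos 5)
Your proposal is correct and takes essentially the same route as the paper: provide $Z^n$ to all terminals as a genie, invoke the converse of Theorem~\ref{lemma2} for \eqref{lemm2cond1n}--\eqref{necesconst4}, and use the cut-set bound for \eqref{necesconst5}. Your remark that a single time-sharing variable $Q=(\tilde{Q},Z^n)$ simultaneously serves all four inequalities is a detail the paper leaves implicit, but it is consistent with how both converses single-letterize.
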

\begin{proof}
For any $Z$ that satisfies the Markov condition $S_1-Z-S_2$, we consider the genie-aided setting in which $Z^n$ is provided to  the encoders and the decoder. Then, we obtain the setting in Theorem~\ref{lemma2}. Conditions \eqref{lemm2cond1n}-\eqref{necesconst4}  follow from  Theorem~\ref{lemma2}, whereas condition \eqref{necesconst5} follows from the cut-set bound. 
\end{proof} 

\begin{figure}[t]
\centering
\includegraphics[width=0.65\linewidth]{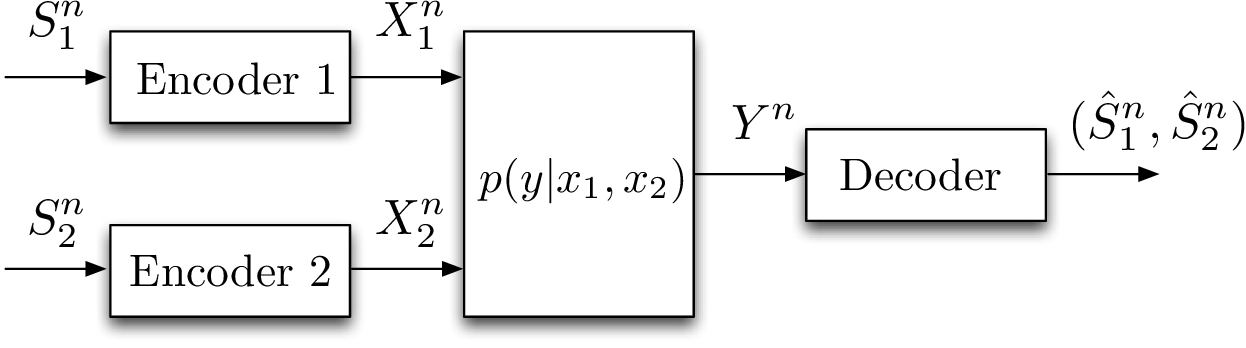}
\caption{Correlated sources over a MAC.}
\label{Fig:SystemModel100}    
\end{figure}

\subsection{Correlated Sources over a Gaussian MAC} 
In this section, we focus on a memoryless MAC with additive Gaussian noise: 
\vspace{-0.19cm}\begin{equation} \label{GausMAC}
Y = X_1 + X_2 + N,
\vspace{-0.11cm}\end{equation}
where $N$ is a standard Gaussian random variable. We impose the input power constraints $\frac{1}{n} \sum_{i=1}^n\mathbb{E}[X_{ji}^2]$ $\leq P$, $j=1,2$. 
In the following, we specialize the necessary conditions of Theorem~\ref{Thm:Necessary} to a Gaussian MAC. 
\begin{corollary}\label{cor6}
If a distortion pair $(D_1, D_2)$ is achievable for sources $(S_1, S_2)$ over the Gaussian MAC in \eqref{GausMAC}, then for every $Z$ that forms a Markov chain $S_1-Z-S_2$, we have
\begin{align} 
R_{S_1|Z} (D_1)  + R_{S_2|Z} (D_2)   &\leq \frac{1}{2} \log ( 1 + \beta_1 P + \beta_2 P ) \label{eq3GGY1}\\
R_{S_1 S_2} (D_1, D_2)  &\leq \frac{1}{2} \log ( 1 + 2P + 2P \sqrt{(1-\beta_1)(1-\beta_2)} ) \label{eq3GGY2}
\end{align}
for some $0\leq \beta_1, \beta_2 \leq 1$. 
\end{corollary}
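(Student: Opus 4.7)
The plan is to obtain Corollary~\ref{cor6} as a direct specialization of Corollary~\ref{corollary5} to the Gaussian setting of \eqref{bivariate}--\eqref{GausMAC}, with the right-hand sides of the two mutual-information inequalities reduced to explicit functions of two shape parameters $(\beta_1,\beta_2)$ via the standard Gaussian MAC analysis of \cite{bross2008gaussian}.

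First I would invoke Corollary~\ref{corollary5} verbatim: for any $Z$ with $S_1-Z-S_2$, an achievable $(D_1,D_2)$ forces, for some auxiliary $Q$ with $X_1-Q-X_2$, both $R_{S_1|Z}(D_1)+R_{S_2|Z}(D_2) \leq I(X_1,X_2;Y|Q)$ and $R_{S_1S_2}(D_1,D_2) \leq I(X_1,X_2;Y)$, where the joint distribution of $(X_1,X_2)$ must respect the average power constraints $\mathbb{E}[X_j^2]\leq P$ for $j=1,2$. The remaining task is therefore to upper-bound these two mutual-information quantities over the class of distributions with this Markov structure and these power constraints.

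The main step is to introduce the parameterization $\beta_j \triangleq \mathbb{E}[\mathrm{Var}(X_j|Q)]/P$, which lies in $[0,1]$ by the power constraint. Conditional on $Q$, the inputs $X_1$ and $X_2$ are independent, so $\mathrm{Var}(Y|Q)=1+\mathrm{Var}(X_1|Q)+\mathrm{Var}(X_2|Q)$; the Gaussian maximum-entropy bound followed by Jensen's inequality then yields \eqref{twocond1}. For the unconditional bound, I would apply Cauchy-Schwarz to $\mathbb{E}[X_1 X_2]=\mathbb{E}_Q[\mathbb{E}[X_1|Q]\mathbb{E}[X_2|Q]]$ together with the variance decomposition $\mathbb{E}_Q[(\mathbb{E}[X_j|Q])^2]=\mathbb{E}[X_j^2]-\mathbb{E}[\mathrm{Var}(X_j|Q)]\leq (1-\beta_j)P$ to conclude $|\mathbb{E}[X_1 X_2]|\leq P\sqrt{(1-\beta_1)(1-\beta_2)}$, and hence $\mathrm{Var}(Y)\leq 1+2P+2P\sqrt{(1-\beta_1)(1-\beta_2)}$, giving \eqref{twocond2} by one more application of the Gaussian entropy bound. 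Substituting these two inequalities into the conditions of Corollary~\ref{corollary5} yields \eqref{eq3GGY1}--\eqref{eq3GGY2}.

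I expect the only subtlety, rather than any deep obstacle, to be coordinating a \emph{single} pair $(\beta_1,\beta_2)$ across the two conditions: since Corollary~\ref{corollary5} delivers both inequalities simultaneously with a common auxiliary $Q$, the $\beta_j$'s defined from that $Q$ serve both bounds without further argument, and the proof is complete.
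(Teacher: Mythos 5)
Your proposal is correct and follows essentially the same route as the paper: invoke Corollary~\ref{corollary5} to get the two inequalities with a common auxiliary $Q$, then bound $I(X_1,X_2;Y|Q)$ and $I(X_1,X_2;Y)$ via \eqref{twocond1}--\eqref{twocond2}. The only difference is that the paper obtains \eqref{twocond1}--\eqref{twocond2} directly by citing \cite{bross2008gaussian}, whereas you re-derive them from the parameterization $\beta_j=\mathbb{E}[\mathrm{Var}(X_j|Q)]/P$, the Gaussian maximum-entropy bound with Jensen, and Cauchy--Schwarz applied to $\mathbb{E}[X_1X_2]=\mathbb{E}_Q[\mathbb{E}[X_1|Q]\mathbb{E}[X_2|Q]]$; that derivation is correct (including the observation that a single $(\beta_1,\beta_2)$ pair comes from the single $Q$ supplied by Corollary~\ref{corollary5}), so the result is the same.
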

\begin{proof}
The corollary follows by considering only \eqref{necesconst4}-\eqref{necesconst5}, and from the fact that the right hand sides (RHSs) of these inequalities are maximized by Gaussian $Q$, $X_1$, and $X_2$ \cite{bross2008gaussian}. 
\end{proof}

\subsubsection{Gaussian Sources over a Gaussian MAC} 
This section studies the necessary conditions for transmitting correlated Gaussian sources over a Gaussian MAC.  
Consider a bivariate Gaussian source $(S_1, S_2)$ such that
\begin{equation} \label{bivariate}
\left ( \begin{matrix} S_1 \\S_2\end{matrix}  \right ) \sim
 \mathcal{N}\left ( \left (\begin{matrix} 0 \\0\end{matrix} \right), \left ( \begin{matrix} 1 & \rho \\\rho & 1\end{matrix}\right) \right),
\end{equation}
transmitted over the DM Gaussian MAC in \eqref{GausMAC}, 
under the squared error distortion measures $d_j(S_j, \hat{S}_j)= (S_j-\hat{S}_j)^2$ for $j=1,2$.

For this setup, various notable results exist, each presenting different sets of necessary conditions. 
The following necessary condition is obtained in \cite[Theorem IV.1]{lapidoth2010sending}:
\vspace{-0.1cm}\begin{equation} \label{LT}
R_{S_1 S_2}( D_1, D_2) \leq \frac{1}{2} \log (1 + 2P(1+\rho)).
\vspace{-0.1cm}\end{equation}  
Another set of necessary conditions is proposed in \cite[Theorem 2]{jain2012energy}. By substituting $\sigma_Z^2=\sigma_1^2=\sigma_2^2=1$ and $E_1=E_2=P$  in  \cite[Theorem 2]{jain2012energy}, these conditions can be stated as follows:
\begin{align}
\frac{1}{(1-\hat{\rho})^2} \ln \left( \frac{1-\rho^2}{D_k}\right) &\leq P,  \quad k=1,2, \label{nc1}\\
(\ln2) R_{S_1 S_2}(D_1, D_2) &\leq P(1+\hat{\rho}), \label{nc2}
\end{align}
for some $0\leq \hat{\rho}\leq |\rho|$. 

Other sets of necessary conditions have recently been presented in \cite[Theorem 1]{7541654},  \cite[Proposition 2]{tian2017matched}, and \cite[Theorems 1 and 4]{yu2016distortion}, all incorporating various auxiliary random variables. It is not possible in general to compare Theorem~\ref{Thm:Necessary} over the full set of conditions presented in these results, since this involves optimization of auxiliary random variables and a large number of parameters. For this reason, here we compare Corollary~\ref{cor6} with  \eqref{LT},  \eqref{nc1}-\eqref{nc2}, along with the conditions from \cite[Corollary 1.1]{7541654}, which is a relaxed version of \cite[Theorem 1]{7541654}. Note that Corollary~\ref{cor6} is also a weaker version of  Theorem~\ref{Thm:Necessary}, where, for fairness, the first two single rate conditions are removed as in \cite[Corollary 1.1]{7541654}.

The set of necessary conditions from \cite[Corollary 1.1]{7541654} can be stated as:
\begin{align}
R_{S_1S_2}(D_1, D_2)  - \frac{1}{2} \log \frac{1+\rho}{1-\rho}  &\leq \frac{1}{2} \log ( 1 + \beta_1 P + \beta_2 P )  \label{eq3LW1}\\
R_{S_1 S_2} (D_1, D_2) &\leq \frac{1}{2} \log ( 1 + 2P + 2P \sqrt{(1-\beta_1)(1-\beta_2)} )  \label{eq3LW2}
\end{align}
for some $0\leq \beta_1, \beta_2 \leq 1$.

For the necessary conditions in Corollary~\ref{cor6}, we let $Z$  be the common part of $(S_1, S_2)$ with respect to Wyner's common information from \eqref{commoninfo}. 
The common part can be characterized as follows \cite[Proposition 1]{xu2016lossy}. Let $Z$, $N_1$, and $N_2$ be standard random variables. Then, $S_1$, and $S_2$ can be expressed as 
\begin{equation}
S_i = \sqrt{\rho} Z + \sqrt{1-\rho} N_i, \quad i=1,2,
\end{equation}
where $I(S_1, S_2 ; Z) = \frac{1}{2} \log \frac{1+\rho}{1-\rho}$ and $I(S_1, S_2; Z') > \frac{1}{2} \log \frac{1+\rho}{1-\rho}$ for all $S_1-Z'-S_2$ with $Z'\neq Z$.

The rate distortion function for $S_i$ with encoder and decoder side information $Z$ is \cite{wyner1978rate}: 
\begin{equation}\label{point1}
R_{S_i|Z}(D_i)  = \left \{ \begin{matrix} 
\frac{1}{2} \log \frac{1-\rho}{D_i} & \text{ if } & \qquad 0 < D_i < 1-\rho \\
0 & \text{ if } & D_i \geq  1-\rho 
\end{matrix}\right .  
\end{equation}
for $i=1,2$. 
We also have, from \cite{xiao2005compression, lapidoth2010sending}, that,
\begin{align} \label{RateGausLHS}
R_{S_1 S_2} (D_1, D_2)  = \left\{ \begin{matrix}
\frac{1}{2} \log \left( \frac{1}{\min(D_1, D_2)}\right ) & \text{ if } (D_1, D_2) \in \mathcal{D}_1 \\
\frac{1}{2} \log^+ \left( \frac{1-\rho^2}{D_1 D_2}\right ) & \text{ if } (D_1, D_2) \in \mathcal{D}_2 \\
\frac{1}{2} \log^+ \left( \frac{1-\rho^2}{D_1 D_2 - \left(\rho - \sqrt{(1-D_1)(1-D_2)}\right)^2}\right ) & \text{ if } (D_1, D_2) \in \mathcal{D}_3 \\
\end{matrix}\right . , 
\end{align} 
where $\log^+ (x) = \max\{0, \log(x)\}$, and 
\begin{align}
\mathcal{D}_1&=\bigg\{  
(D_1, D_2): (0\leq D_1 \leq 1-\rho^2, D_2 \geq 1-\rho^2 + \rho^2D_1) \text{ or } \nonumber \\ 
&\qquad \qquad \qquad \qquad \Big(1-\rho^2 < D_1 \leq 1, D_2 \geq 1-\rho^2 + \rho^2 D_1, D_2 \leq \frac{D_1-(1-\rho^2)}{\rho^2}\Big)
\bigg\} \\
\mathcal{D}_2&=\bigg\{  
(D_1, D_2): 0\leq D_1 \leq 1-\rho^2, 0\leq D_2 < (1-\rho^2-D_1) \frac{1}{1-D_1}
\bigg\} \\
\mathcal{D}_3&=\bigg\{  
(D_1, D_2): \Big(0\leq D_1 \leq 1-\rho^2, (1-\rho^2-D_1) \frac{1}{1-D_1} \leq D_2 < 1-\rho^2 + \rho^2 D_1 \Big) \text{ or } \nonumber \\  
&\qquad \qquad \qquad \qquad \qquad \Big( 
1-\rho^2 < D_1 \leq 1, \frac{D_1-(1-\rho^2)}{\rho^2} < D_2 < 1-\rho^2 +\! \rho^2 D_1\Big)
\bigg\} .
\end{align}
Fig.~\ref{Fig:1-regions} illustrates the regions $\mathcal{D}_1$, $\mathcal{D}_2$, and $\mathcal{D}_3$ as in \cite{lapidoth2010sending}.

\begin{figure}[t]
\centering
\subfloat[]{\includegraphics[width=0.35\linewidth]{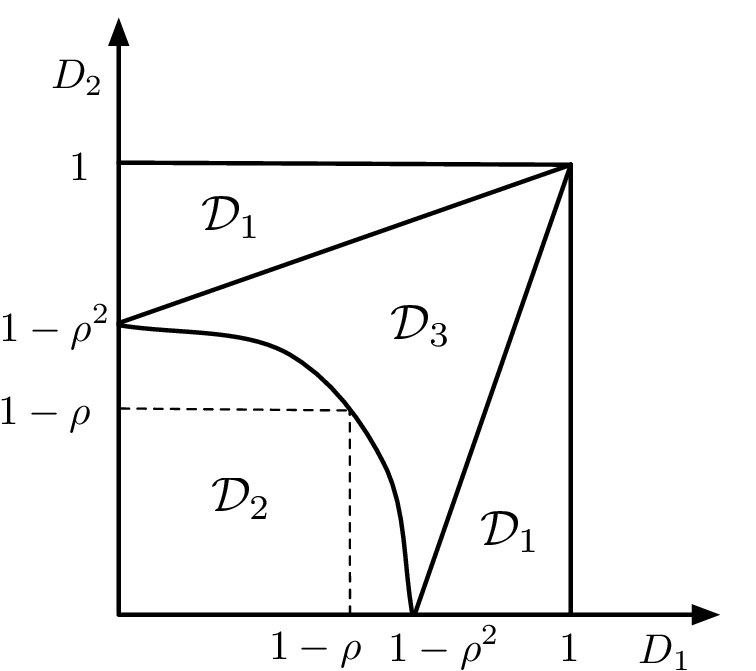}\label{Fig:1-regions}}\hspace{2cm} 
\subfloat[]{\includegraphics[width=0.35\linewidth]{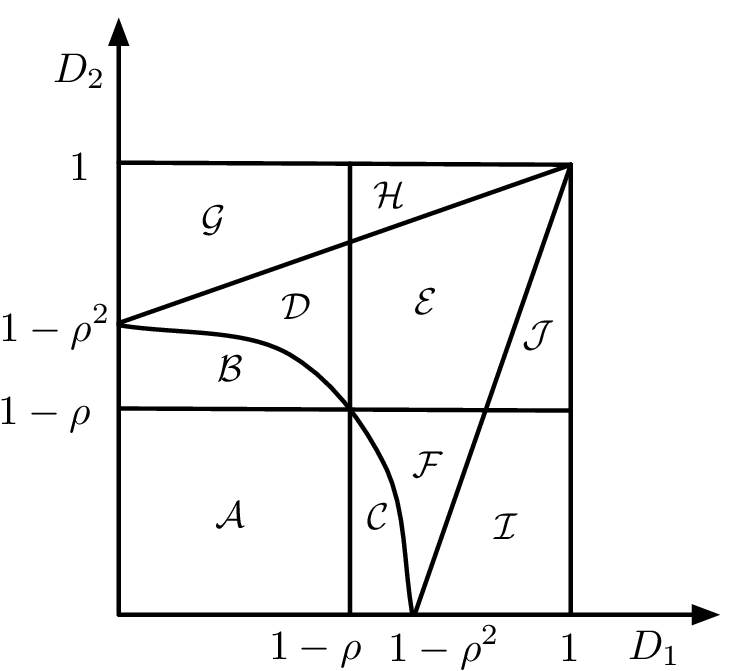}\label{regions2}} 
\caption{\protect\subref{Fig:1-regions} Regions $\mathcal{D}_1$, $\mathcal{D}_2$, and $\mathcal{D}_3$. \protect\subref{regions2} Partitioned distortion regions for $(D_1, D_2)$.}
\label{fig:distortionregions}
\end{figure}

By analyzing the corresponding expressions from  Corollary~\ref{cor6},  \eqref{LT}, \eqref{nc1}-\eqref{nc2}, and \eqref{eq3LW1}-\eqref{eq3LW2}, the next proposition shows that there exist $(D_1, D_2)$ values 
for which Corollary~\ref{cor6} is tighter; that is, while other results cannot make any judgement on the achievability of such $(D_1, D_2)$ pairs, they are shown not to be achievable thanks to Corollary~\ref{cor6}. 
\begin{proposition}\label{propbounds1}
There exist distortion pairs that are included in the outer bounds of \cite[Theorem IV.1]{lapidoth2010sending}, \cite[Theorem 2]{jain2012energy}, and \cite[Corollary 1.1]{7541654}, but not in the outer bound of Corollary~\ref{cor6}. 
\end{proposition}
\begin{proof}
The details are given in Appendix~\ref{appendixD}. 
\end{proof}

A graphical illustration of the bounds from Corollary~\ref{cor6}, \cite[Theorem IV.1]{lapidoth2010sending}, and \cite[Corollary 1.1]{7541654} can be provided as follows.  
Define
\begin{align}
r_1 (\beta_1, \beta_2) &\triangleq \frac{1}{2} \log ( 1 + 2P + 2P \sqrt{(1-\beta_1)(1-\beta_2)} ),   
\\
r_2 (\beta_1, \beta_2)  &\triangleq \frac{1}{2} \log ( 1 + \beta_1 P + \beta_2 P ), 
\end{align}
and consider the region 
\begin{equation}\label{region}
\mathcal{R} = \bigcup_{0\leq \beta_1, \beta_2\leq 1} \left \{(R_1, R_2): R_1\leq r_1(\beta_1, \beta_2), R_2\leq r_2(\beta_1, \beta_2) \right \}. 
\end{equation} 
The necessary conditions in Corollary~\ref{cor6} state that, if a $(D_1, D_2)$ pair is achievable, then 
\begin{equation}\label{checkGGY}
\left (R_{S_1 S_2} (D_1, D_2), R_{S_1|Z} (D_1)  + R_{S_2|Z} (D_2) \right ) \in \mathcal{R}. 
\end{equation} 
The necessary conditions in \eqref{eq3LW1}-\eqref{eq3LW2} state that, if a $(D_1, D_2)$ pair is achievable, then
\begin{equation}\label{checkLW}
\left(R_{S_1 S_2} (D_1, D_2), R_{S_1 S_2} (D_1, D_2) - \frac{1}{2} \log \frac{1+\rho}{1-\rho}  \right) \in \mathcal{R}.
\end{equation}
Let $D_1 = 0.145 <  1-\rho$. 
Consider first Region $\mathcal{B}$, for which $D_1 \leq 1-\rho$ and $1-\rho \leq D_2 \leq \frac{1-\rho^2-D_1}{1-D_1}$. 
For a $(D_1, D_2)$ pair in Region $\mathcal{B}$, i.e., $D_1=0.145$ and $1-\rho\leq D_2\leq \frac{1-\rho^2-D_1}{1-D_1}$, we have from \eqref{point1} and \eqref{RateGausLHS} that
\begin{equation}\label{pointDDYcheck}
\left ( R_{S_1 S_2} (D_1, D_2) , R_{S_1|Z} (D_1)  + R_{S_2|Z} (D_2)\right ) = \left (\frac{1}{2}\log \frac{1-\rho^2}{D_1D_2}, \frac{1}{2}\log \frac{1-\rho}{D_1} \right ). 
\end{equation}
The $(R_{S_1 S_2}(D_1, D_2), R_{S_1|Z}(D_1)+R_{S_2|Z}(D_2))$ pairs obtained from \eqref{pointDDYcheck} for increasing $D_2$ values within Region $\mathcal{B}$ are illustrated with a green ``+'' sign in Fig.~\ref{fig:p065}. The region $\mathcal{R}$ from \eqref{region} is the region shaded in blue in the same figure. Whenever a point from $\eqref{pointDDYcheck}$ falls outside of $\mathcal{R}$, we conclude that the corresponding $(D_1, D_2)$ pair is not achievable according to Corollary~\ref{cor6}. 
\begin{figure}[t]
\centering
\subfloat[]{\includegraphics[width=0.5\linewidth]{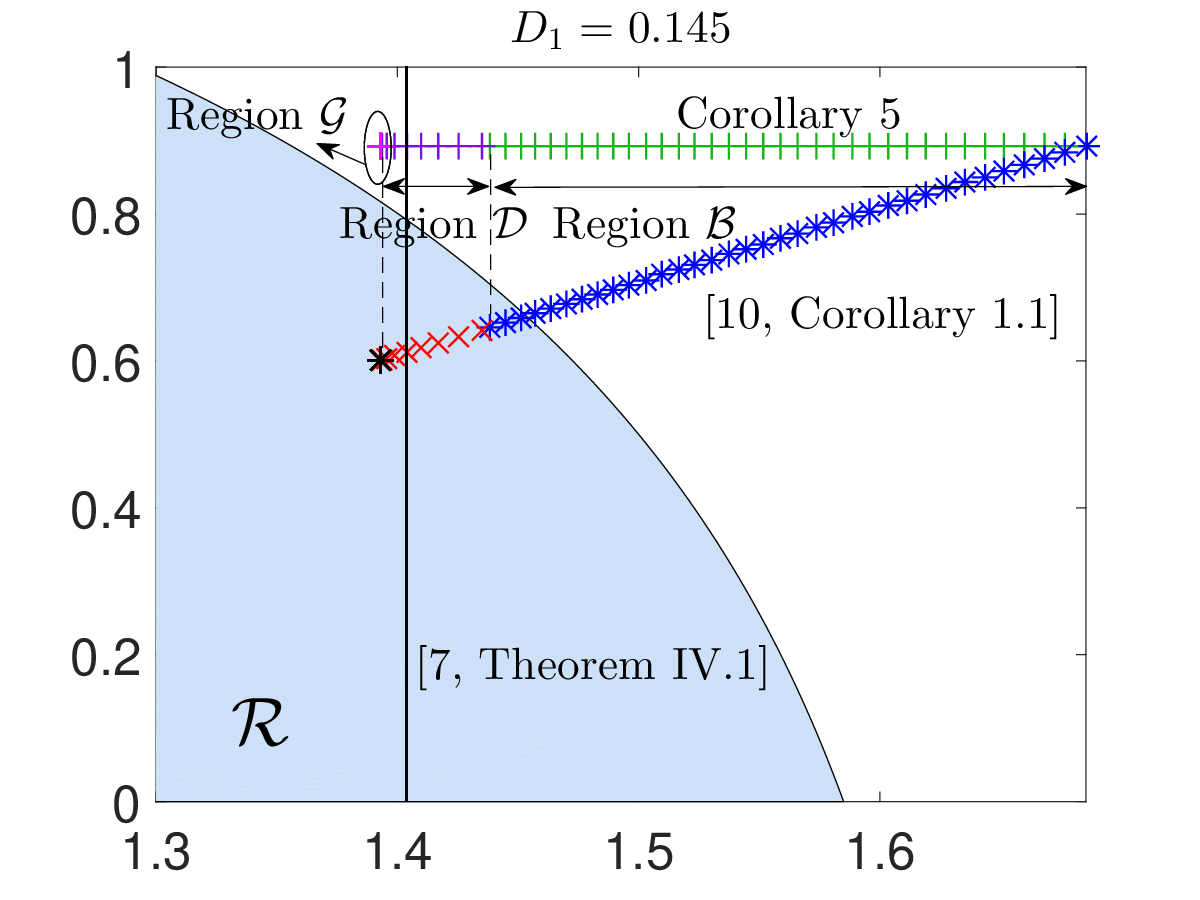}\label{fig:p065}} 
\subfloat[]{\includegraphics[width=0.5\linewidth]{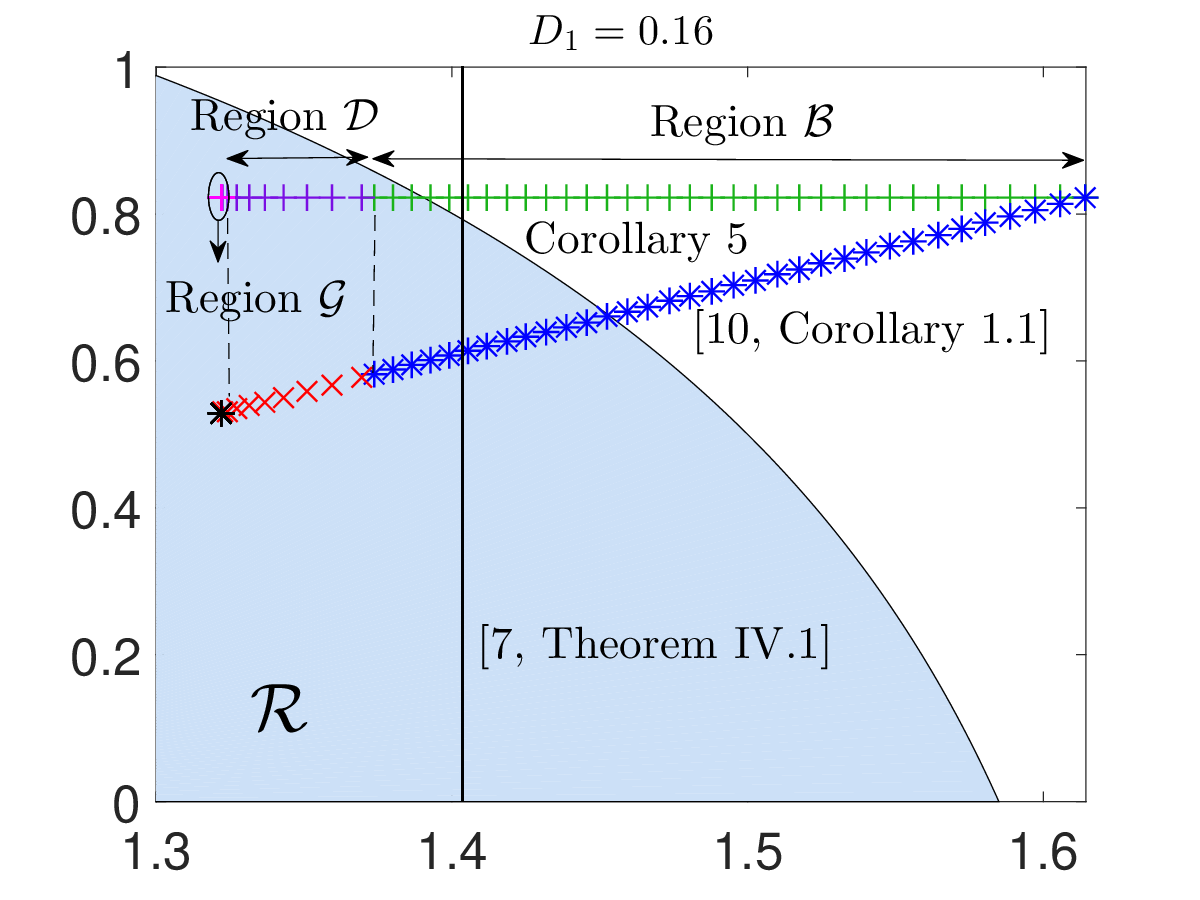}\label{fig:p16}} 
\caption{Comparison of the necessary conditions from Corollary~\ref{cor6} with the necessary conditions from \eqref{LT}  and \eqref{eq3LW1}-\eqref{eq3LW2}, respectively, for $P=2$, $\rho=0.5$, and \protect\subref{fig:p065} $D_1=0.145$, \protect\subref{fig:p16} $D_1=0.16$.}
\label{fig:dists}
\end{figure}
We also evaluate
\begin{equation}\label{pointLWheck}
\left(R_{S_1 S_2} (D_1, D_2), R_{S_1 S_2} (D_1, D_2) - \frac{1}{2} \log \frac{1+\rho}{1-\rho}  \right) = \left (\frac{1}{2}\log \frac{1-\rho^2}{D_1D_2}  , \frac{1}{2}\log \frac{(1-\rho)^2}{D_1D_2}   \right )
\end{equation}
for points $(0.145, D_2)$ in Region $\mathcal{B}$, using \eqref{RateGausLHS}. The points corresponding to \eqref{pointLWheck} for different $D_2$ values are marked with a dark blue ``*'' in Fig.~\ref{fig:p065}. 
Whenever a point from $\eqref{pointLWheck}$ is not contained within $\mathcal{R}$, then the corresponding $(D_1, D_2)$ pair is not achievable according to \eqref{eq3LW1}-\eqref{eq3LW2}.

Next, we consider $(D_1, D_2)$ pairs from Region $\mathcal{D}$, for which $D_1 \leq 1-\rho$ and $\frac{1-\rho^2-D_1}{1-D_1}\leq D_2 \leq 1-\rho^2+\rho^2D_1$. We evaluate 
\begin{align}\label{pointDDYcheck1}
\left (\!R_{S_1 S_2} (D_1, \!D_2), R_{S_1|Z} (D_1)  \!+\! R_{S_2|Z} (\!D_2) \!\right ) \!=\!\! \left (\!\frac{1}{2} \log^+\!\! \bigg( \frac{1\!-\!\rho^2}{D_1 D_2 \!-\! \big(\rho \!-\!\! \sqrt{\!(1\!\!-\!\!D_1)(1\!\!-\!\!D_2)}\big)^2}\!\bigg), \!\frac{1}{2}\log \!\frac{1\!-\!\rho}{D_1} \!\right )
\end{align}
from \eqref{point1}-\eqref{RateGausLHS}. The values obtained for $D_1=0.145$ and $D_2\in \left(\frac{1-\rho^2-D_1}{1-D_1}, 1-\rho^2+\rho^2D_1\right)$ are marked with a purple ``+'' in Fig.~\ref{fig:p065}. Similarly, from \eqref{RateGausLHS}, for $(D_1, D_2) \in \text{Region }\mathcal{D}$, 
\vspace{-0.2cm}\begin{align}\label{pointLWheck1}
&\left(R_{S_1 S_2} (D_1, D_2), R_{S_1 S_2} (D_1, D_2) - \frac{1}{2} \log \frac{1+\rho}{1-\rho}  \right) \notag \\
& = \!\left (\!
\frac{1}{2} \log^+\!\! \Bigg( \!\frac{1\!-\!\rho^2}{D_1 D_2 \!-\! \big(\rho \!-\!\! \sqrt{\!(1\!\!-\!\!D_1)(1\!\!-\!\!D_2)}\big)^2}\Bigg), \!
\frac{1}{2} \log^+\!\! \Bigg(\! \frac{(1\!-\!\rho)^2}{D_1 D_2 \!-\! \big(\rho \!-\! \sqrt{\!(1\!\!-\!\!D_1)(1\!\!-\!\!D_2)}\big)^2}\!\Bigg)\!\!\!
 \right ) \!. 
\end{align}
Corresponding points for $D_1=0.145$ and increasing $D_2$ values in Region $\mathcal{D}$ are illustrated with a red ``x'' marking in Fig.~\ref{fig:p065}.

Finally, we consider $(D_1, D_2)\in \text{Region } \mathcal{G}$, where $D_1 \leq 1-\rho$, $1-\rho^2+\rho^2D_1\leq D_2\leq 1$, and
\begin{equation}\label{pointDDYcheck12}
\left (R_{S_1 S_2} (D_1, D_2), R_{S_1|Z} (D_1)  + R_{S_2|Z} (D_2) \right ) 
= \left(\frac{1}{2} \log  \frac{1}{D_1},  \frac{1}{2} \log \frac{1-\rho}{D_1}\right ). 
\end{equation}
Corresponding points are marked with a pink ``+'' in Fig.~\ref{fig:p065}. 
Note that since \eqref{pointDDYcheck12} depends only on $D_1$, these points appear as a single point.  
We also evaluate
\begin{equation}\label{pointLWheck12}
\left(R_{S_1 S_2} (D_1, D_2), R_{S_1 S_2} (D_1, D_2) - \frac{1}{2} \log \frac{1+\rho}{1-\rho}  \right)  = 
\left(\frac{1}{2} \log  \frac{1}{D_1}, \frac{1}{2} \log  \frac{1-\rho}{D_1(1+\rho)}\right)
\end{equation}
for $1-\rho^2+\rho^2D_1\leq D_2\leq 1$ from \eqref{RateGausLHS}. This is marked with a black ``*'' in Fig.~\ref{fig:p065}. Since \eqref{pointLWheck12} also depends only on $D_1$, they appear as a single point.  
One can observe from \eqref{pointDDYcheck}-\eqref{pointLWheck}, as well as from \eqref{pointDDYcheck1}-\eqref{pointLWheck1} and \eqref{pointDDYcheck12}-\eqref{pointLWheck12},  that the points that share the same value on the horizontal axis in Fig.~\ref{fig:p065}  correspond to the same $(D_1, D_2)$ pairs, as the first terms of both \eqref{pointDDYcheck}-\eqref{pointLWheck} and  \eqref{pointDDYcheck1}-\eqref{pointLWheck1} as well as \eqref{pointDDYcheck12}-\eqref{pointLWheck12} are equal.

Lastly, we illustrate the RHS of \eqref{LT} with a straight line in Fig.~\ref{fig:p065}. The points on the RHS of this line correspond to $(D_1, D_2)$ pairs that are not achievable according to \eqref{LT}, since for these points one has
\begin{equation} 
R_{S_1 S_2}( D_1, D_2) > \frac{1}{2} \log (1 + 2P(1+\rho)). 
\end{equation} 

In order to compare the three bounds, we investigate the $(D_1, D_2)$ pairs that cannot be achieved by Corollary~\ref{cor6}, \eqref{LT}, and \eqref{eq3LW1}-\eqref{eq3LW2}, respectively. 
From Fig.~\ref{fig:p065}, we find that when $D_1=0.145$, some $(D_1, D_2)$ pairs in Regions $\mathcal{G}$ and $\mathcal{D}$ (from Fig.~\ref{regions2}) satisfy both \eqref{LT} and \eqref{eq3LW1}-\eqref{eq3LW2}, but not Corollary~\ref{cor6}, as can be observed from the pink and purple points marked with the ``+'' sign that are on the left hand side (LHS) of the straight line,  but outside of $\mathcal{R}$. 
Therefore, we can conclude that there exist distortion pairs for which Corollary~\ref{cor6} provides tighter conditions than both \eqref{LT} and \eqref{eq3LW1}-\eqref{eq3LW2} in Regions $\mathcal{G}$ and $\mathcal{D}$.

We also compare the corresponding bounds when $D_1=0.16$ in Fig.~\ref{fig:p16}. From the green points marked with the ``+'' sign that are on the LHS of the straight line but are outside of $\mathcal{R}$, we observe that there exist distortion pairs in Region $\mathcal{B}$ for which Corollary~\ref{cor6} provides tighter conditions than both \eqref{LT} and \eqref{eq3LW1}-\eqref{eq3LW2}.


We note, however, that Corollary~\ref{cor6} is not necessarily strictly tighter for all $(D_1, D_2)$ pairs. The next proposition states that there exist $(D_1, D_2)$ pairs for which \eqref{LT} is tighter than Corollary~\ref{cor6}. 

\begin{proposition}\label{propbounds2}
There exist distortion pairs that are in the outer bound of Corollary~\ref{cor6}, but not in the outer bound of \cite[Theorem IV.1]{lapidoth2010sending}. 
\end{proposition}
\begin{proof}
The details are available in Appendix~\ref{appendixE}. 
\end{proof}

\subsubsection{Binary Sources over a Gaussian MAC} 
We next study  the transmission of a doubly symmetric binary source (DSBS) over a Gaussian MAC. Consider a DSBS with joint distribution
\vspace{-0cm}\begin{equation}
p(S_1\!=\!s_1, S_2\!=\!s_2) = \frac{1\!-\!\alpha}{2} (1\!-\!|s_1\!-\!s_2|)  +  \frac{\alpha}{2} |s_1\!-\!s_2|,
\vspace{-0cm}\end{equation}
a memoryless Gaussian MAC from \eqref{GausMAC}, 
and Hamming distortion $d_j(S_j, \hat{S}_j)\!=\! |S_j \!-\! \hat{S}_j|$ where $\hat{\mathcal{S}}_j\!=\!\mathcal{S}_j\!=\!\{0,1\}$ for $j=1,2$.

For the conditions in Corollary~\ref{cor6}, we choose the variable $Z$ as illustrated in Fig.~\ref{Fig:Z}a. Then the joint distribution for $(S_i, Z)$ is as given in Fig.~\ref{Fig:Z}b for $i=1,2$. Note that $Z$ forms a $Z$-channel both with $S_1$ and $S_2$ while satisfying $S_1-Z-S_2$. Using the conditional rate-distortion function for the $Z$-channel setting from \cite{steinberg2009coding}, one can evaluate Corollary~\ref{cor6}. 

We compare Corollary~\ref{cor6} first with the set of necessary conditions from \cite[Remark IV.1]{lapidoth2010sending},
\begin{equation}\label{LTDSBS}
R_{S_1 S_2}( D_1, D_2) \leq \frac{1}{2} \log (1 + 2P(1+\rho_{max})),
\end{equation}
where $R_{S_1 S_2}( D_1, D_2)$ is as in \cite[Theorem 2]{nayak2010successive}, and $\rho_{max}$ is the Hirschfield-Gebelin-R\'{e}nyi maximal correlation for DSBS given by  \cite{anantharam2014hypercontractivity}:
\begin{equation}
\rho_{max} = \sqrt{2(\alpha^2 + (1-\alpha)^2) -1}.
\end{equation}

\begin{figure}[t]
\centering
\includegraphics[width=0.65\linewidth]{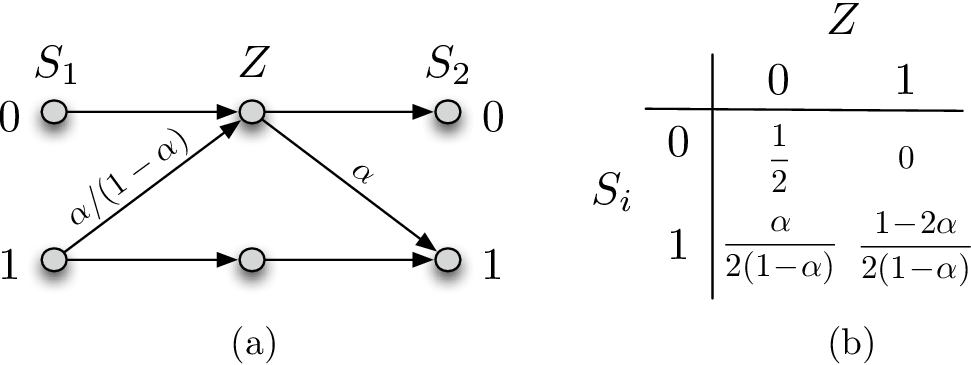}
\caption{(a) Z-channel structure. (b) $p(S_i, Z)$ for $i=1,2$.}
\label{Fig:Z}    
\end{figure}

We next consider the necessary conditions from \cite[Corollary 1.1]{7541654},
\begin{align}
&R_{S_1S_2}(D_1, D_2)  - 1 - h(\alpha) + 2h(\theta) \leq \frac{1}{2} \log ( 1 + \beta_1 P + \beta_2 P )  \label{LWDS11}\\
&R_{S_1 S_2} (D_1, D_2) \leq \frac{1}{2} \log ( 1 + 2P + 2P \sqrt{(1-\beta_1)(1-\beta_2)} )  \label{LWDS12}
\end{align} 
for some $0\leq \beta_1, \beta_2 \leq 1$, where $\theta = (1/2)(1-\sqrt{1-2\alpha})$ and $h(\lambda)=-\lambda\log\lambda-(1-\lambda)\log(1-\lambda)$ is the binary entropy function, and $C_{W}(S_1, S_2)$ from \eqref{commoninfo} is as in   \cite{wyner1975common}.

The last set of necessary conditions we consider is obtained from \cite[Theorem 1]{7541654} by removing (9a) and (9b) and letting $W \leftarrow Z$, where $Z$ is as defined in Fig. \ref{Fig:Z}, 
\begin{align}
&R_{S_1S_2}(D_1, D_2)  - 1+\frac{\alpha}{1-\alpha}h(\alpha)\leq \frac{1}{2} \log ( 1 + \beta_1 P + \beta_2 P ),  \label{LWDS21} \\
&R_{S_1 S_2} (D_1, D_2) \leq \frac{1}{2} \log ( 1 + 2P + 2P \sqrt{(1-\beta_1)(1-\beta_2)} ),  \label{LWDS22}
\end{align}
for some $0\leq \beta_1, \beta_2 \leq 1$. 
In the following, we compare Corollary~\ref{cor6} 
with the necessary conditions from \eqref{LTDSBS} and \eqref{LWDS11}-\eqref{LWDS12} as well as from \eqref{LWDS21}-\eqref{LWDS22}. 

\begin{proposition}\label{propbound3}
There exist distortion pairs  that satisfy the outer bounds of \cite[Remark IV.1]{lapidoth2010sending}, \cite[Corollary 1.1]{7541654}, and \eqref{LWDS21}-\eqref{LWDS22} but not the outer bound of Corollary~\ref{cor6} for the binary setup. 
\end{proposition}
\begin{proof}
The details are provided in Appendix~\ref{appendixF}.
\end{proof}

\section{Conclusions}\label{Sec:Conc} 
We have considered the lossy coding of correlated sources over a MAC. We have provided an achievable scheme for the transmission of correlated sources in the presence of decoder side information, and investigated the conditions under which separate source and channel coding is optimal when the encoder and/or decoder has access to side information.  
By leveraging the obtained separation theorem in the presence of a common side information conditioned on which the two sources are independent, we derived a simple and computable set of necessary conditions for the lossy coding of correlated sources over a MAC. 
The comparison of the new necessary conditions with the known results from the literature are provided for the Gaussian setting, i.e., Gaussian sources transmitted over a Gaussian MAC, as well as for a DSBS over a Gaussian MAC. 
Identifying necessary conditions for the transmissibility of correlated sources is an active open research direction. 
A direct comparison of the proposed necessary conditions appear to be difficult analytically, and, due to the dimensionality of the search space, numerically. 
Accordingly, we point to this problem as an interesting future direction. Another interesting open problem is the optimality/suboptimality of separation in the presence of decoder-only side information, conditioned on which the two sources are independent. 
Other future directions include the (sub)optimality of separation in other multi-terminal scenarios with side information.

\appendices 

\section{Proof of Theorem~\ref{lemma:hybrid}}\label{appendix0}
Our achievable scheme is along the lines of \cite{minero2015unified}. For completeness, we provide the details in the sequel. 

\emph{Generation of the codebook:} Choose $\epsilon > \epsilon' > 0$. 
Fix $p(u_1|s_1)$, $p(u_2|s_2)$, $x_1(u_1, s_1)$, $x_2(u_2, s_2)$, $\hat{s}_1(u_1, u_2, y, z)$ and $\hat{s}_2(u_1, u_2, y, z)$ with $\mathbb{E}[d_j(S_j, \hat{S}_j)]\leq \frac{D_j}{1+\epsilon}$  for $j=1,2$. 

For each $j=1,2$, generate $2^{nR_j}$ sequences $u_j^n(m_j)$ for $m_j\in \{1,\ldots,  2^{nR_j}\}$  independently at random conditioned on the distribution $\prod_{i=1}^n p_{U_j}(u_{ji})$. The codebook is known by the two encoders and the decoder. 

\emph{Encoding:}
Encoder $j=1,2$ observes a sequence $s_j^n$ and tries to find an index $m_j\in \{1, \ldots,  2^{nR_j}\}$ such that the corresponding $u_j^n(m_j)$ is jointly typical with $s_j^n$, i.e., $(s_j^n, u_j^n(m_j))\in \mathcal{T}_{\epsilon'}^{(n)}$. If more than one index exist, the encoder selects one of them uniformly at random. If no such index exists, it selects a random index uniformly. Upon selecting the index, encoder~$j$ sends $x_{ji}=x_j (u_{ji}(m_j), s_{ji})$ for $i=1,\ldots, n$ to the decoder.

\emph{Decoding:} The decoder observes the channel output $y^n$ and side information $z^n$, and tries to find a unique pair of indices $(\hat{m}_1, \hat{m}_2)$ such that $(u_1^n(\hat{m}_1), u_2^n(\hat{m}_2), y^n, z^n)\in \mathcal{T}_{\epsilon}^{(n)}$ and sets $\hat{s}_{ji}=\hat{s}_j(u_{1i}(m_1), u_{2i}(m_2), y_i, z_i)$ for $i=1, \ldots, n$ for $j=1, 2$. 

\emph{Expected Distortion Analysis:} 
Let $M_1$ and $M_2$ denote the indices selected by encoder $1$ and encoder $2$.  Define 
\begin{equation}
\mathcal{E} \{(S_1^n, S_2^n, U_1^n(\hat{M}_1), U_2^n(\hat{M}_2), Y^n, Z^n)\notin \mathcal{T}_{\epsilon}^{(n)}\}
\end{equation}
such that the distortion pair $(D_1, D_2)$ is satisfied if $P(\mathcal{E})\rightarrow 0$ as $n\rightarrow\infty$. Let
\begin{align}
\mathcal{E}_j& = \{(S_j^n, U_j^n(m_j))\notin \mathcal{T}_{\epsilon'}^{(n)} \;\;\forall m_j\} , \quad j=1,2\\
\mathcal{E}_3& = \{(S_1^n, S_2^n, U_1^n(M_1), U_2^n(M_2), Y^n, Z^n)\notin \mathcal{T}_{\epsilon}^{(n)}\} \\
\mathcal{E}_4& = \{(U_1^n(m_1), U_2^n(m_2), Y^n, Z^n)\in \mathcal{T}_{\epsilon}^{(n)} \text{ for some } m_1\neq M_1, m_2\neq M_2\} \\
\mathcal{E}_5& = \{(U_1^n(m_1), U_2^n(M_2), Y^n  , Z^n) \in  \mathcal{T}_{\epsilon}^{(n)} \text{  for some } m_1  \neq  M_1\} \label{error25}\\
\mathcal{E}_6& = \{(U_1^n(M_1), U_2^n(m_2), Y^n  , Z^n) \in  \mathcal{T}_{\epsilon}^{(n)} \text{  for some } m_2  \neq  M_2\} 
\end{align}
Then, 
\begin{align}
&P(\mathcal{E})\leq P(\mathcal{E}_1) + P(\mathcal{E}_2) + P(\mathcal{E}_3 \cap \mathcal{E}_1^c \cap \mathcal{E}_2^c) + P(\mathcal{E}_4) + P(\mathcal{E}_5) + P(\mathcal{E}_6).
\end{align} 
Through standard techniques based on joint typicality coding, it can be shown that $P(\mathcal{E})\rightarrow 0$ as $n\rightarrow \infty$ and one can bound the expected distortions for $\mathcal{E}^c$ for the two sources $S_1$ and $S_2$,  when the sufficient conditions in \eqref{eq4n}-\eqref{eq6n} are satisfied.

\section{Proof of Theorem~\ref{lemma3}}\label{appendixB}

\subsection{Achievability}
Our source coding part is based on the distributed source coding scheme with a common part from \cite{wagner2011distributed}. For completeness, we briefly outline the problem setup in \cite{wagner2011distributed}, also depicted in Fig.~\ref{Fig:SystemModelWKA}. This problem considers the transmission of correlated DM sources $(Y_0, Y_1, Y_2)$ such that $Y_j$ is observed by encoder $j=0,1,2$. Lossless reconstruction of source $Y_0$ is required at the decoder, while the remaining two sources, $Y_1$ and $Y_2$, are recovered in a lossy manner, with respect to corresponding per-letter distortion constraints. In other words, we have
\begin{equation} \label{perletter}
\limsup_{n\rightarrow \infty} \frac{1}{n}\sum_{i=1}^n\mathbb{E}[d_j(Y_{ji}, \hat{Y}_{ji})]  \leq D_j, \quad j=1,2.
\end{equation}
and $P(Y_0^n\neq \hat{Y}_0^n)\rightarrow 0$ as $n\rightarrow \infty$. Sources $Y_1$ and $Y_2$ also have a common component $X$ such that, for a pair of deterministic functions $f$ and $g$, $X=f(Y_1) = g(Y_2)$ 
and $H(X)>0$. 
An achievable rate-distortion region for the distributed source coding system in Fig.~\ref{Fig:SystemModelWKA} is given in \cite[Theorem~$1$]{wagner2011distributed}.

\begin{figure}[t]
\centering
\includegraphics[width=0.5\linewidth]{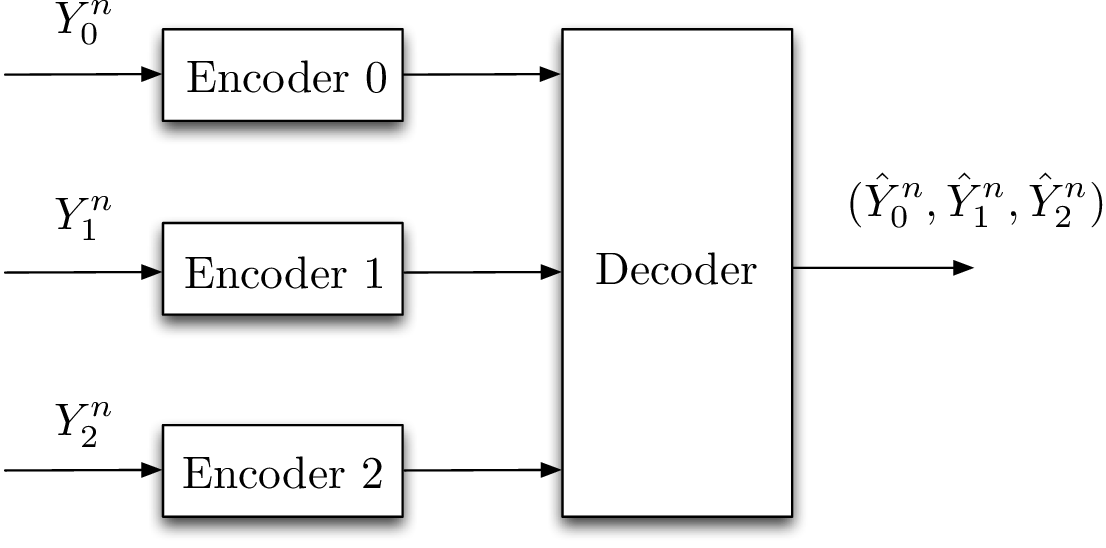}
\caption{Distributed source coding for correlated sources $(Y_0, Y_1, Y_2)$ where $Y_j$ is observed by encoder $j=0,1,2$. The decoder reconstructs $Y_0$ losslessly, while $Y_1$ and $Y_2$ are reconstructed in a lossy manner, with respect to the distortion criterion in \eqref{perletter}.}
\label{Fig:SystemModelWKA}    
\end{figure}

By letting $Y_0\leftarrow Z$, $Y_j\leftarrow (S_j, Z)$ for $j=1,2$, and $X\leftarrow Z$ in Fig.~\ref{Fig:SystemModelWKA}, we observe for this setup that any achievable rate pair for the system in Fig.~\ref{Fig:SystemModelWKA} is also achievable for our system. This follows from the fact that in our setup $Z$ is available to both encoders, as a result, the encoders can cooperate to send it to the decoder and realize any achievable scheme in \cite{wagner2011distributed}.

Letting $U= X$ in \cite[Theorem~$1$]{wagner2011distributed} and substituting $X\leftarrow Z$, $Y_0\leftarrow Z$, $\hat{Y}_0\leftarrow \hat{Z}$, $Y_j\leftarrow (S_j, Z)$, $V_j\leftarrow U_j$, $\hat{Y}_j\leftarrow \hat{S}_j$, and $d_j(Y_j, \hat{Y}_j)\leftarrow d_j(S_j, \hat{S}_j)$ for $j=1,2$, we find that a distortion pair $(D_1, D_2)$ is achievable for the rate triplet $(R_0, R_1, R_2)$ if 
\begin{align}
R_0&\geq H(Z|Z, U_1, U_2) \label{eqCommon1}\\
R_1&\geq I(S_1, Z; U_1|Z, U_2) \label{eqCommon2}\\
R_2&\geq I(S_2, Z; U_2|Z, U_1) \label{eqCommon3}\\
R_0+R_1&\geq H(Z|Z, U_2) + I(S_1, Z; U_1|Z, U_2)\label{eqCommon4}\\
R_0+R_2&\geq H(Z|Z, U_1) + I(S_2, Z; U_2|Z, U_1)\label{eqCommon5}\\
R_1+R_2&\geq I(S_1, S_2, Z; U_1, U_2, Z|Z)\label{eqCommon6}\\
R_0+R_1+R_2&\geq H(Z)+I(S_1, S_2, Z; U_1, U_2, Z|Z)\label{eqCommon7}
\end{align}
and $\mathbb{E}[d_j(S_j, \hat{S}_j)]\leq D_j$ for $j=1,2$, for some distribution 
\begin{align}\label{eq:distr} 
&p(z,  s_1, s_2, u_1, u_2, \hat{s}_1, \hat{s}_2) =p(z, s_1, s_2) p(u_1|s_1, z) p(u_2|s_2, z) p(\hat{s}_1, \hat{s}_2| z, u_1, u_2).
\end{align}
Condition \eqref{eqCommon1} can be removed without loss of generality. 
We can write \eqref{eqCommon2} as, 
\begin{align} 
R_1&\geq I(S_1, Z; U_1|Z, U_2) \label{combine1} \\
&=H(U_1|Z, U_2)- H(U_1|S_1, Z, U_2) \label{combine12}\\
&=H(U_1|Z)- H(U_1|S_1, Z) \label{combine13}\\
&= I(S_1;U_1|Z) \label{combine14}
\end{align} 
where \eqref{combine13} is from $U_1-S_1Z-U_2$ and $U_1-Z-U_2$ since
\begin{align}
p(u_1, u_2|z) &= \sum_{s_1, s_2}  \!p(u_1| s_1, z) p(u_2| s_2, z) p(s_1| z)p(s_2| z) \label{markovcond2}\\
&= \sum_{s_1, s_2}  p(u_1, s_1|z) p(u_2, s_2|z) \label{markovcond3}\\
&=  p(u_1|z) p(u_2|z) \label{markovpu1u2}
\end{align} 
where \eqref{markovcond2} is from $U_1-S_1Z-S_2U_2$ and $U_2-S_2Z-S_1$ as well as $S_1-Z-S_2$.

Following the steps in \eqref{combine1}-\eqref{combine14}, we can write \eqref{eqCommon4} as
\begin{equation} \label{combine1old}
R_0+R_1\geq I(S_1;U_1|Z),
\end{equation}
which, comparing with \eqref{combine14}, indicates that \eqref{eqCommon4} can be removed without loss of generality. 

Following similar steps, we can write \eqref{eqCommon3} and  \eqref{eqCommon5} as
\begin{align} 
R_2&\geq I(S_2; U_2|Z)  \label{combine2old} \\
R_0+R_2&\geq I(S_2; U_2|Z)  \label{combine2}
\end{align}
respectively, which  show that condition \eqref{eqCommon5} can also be removed. 
For \eqref{eqCommon6}-\eqref{eqCommon7}, we find that
\begin{align}
I(S_1, S_2, Z; U_1, U_2, Z|Z) 
&=I(S_1, S_2; U_1, U_2|Z) \label{simplified}\\
&=H(U_1|Z)+H(U_2|Z, U_1)\!-\! H(U_1|Z, S_1)- H(U_2|Z, S_2) \label{eq:decompn} \\
&=H(U_1|Z)\!+\!H(U_2|Z)\!-\! H(U_1|Z, S_1)\!-\! H(U_2|Z, S_2) \label{eq:decompn2}\\
&=I(S_1;U_1|Z)+I(S_2;U_2|Z) \label{eq:decompn3}
\end{align}
where \eqref{eq:decompn} holds as $U_1-ZS_1-S_2$ and $U_2-ZS_2-S_1U_1$; and  \eqref{eq:decompn2} follows from $U_1-Z-U_2$ shown in \eqref{markovpu1u2}.

Combining   \eqref{combine14}, \eqref{combine1old}, \eqref{combine2old}, and \eqref{combine2} with \eqref{eq:decompn3}, we restate \eqref{eqCommon1}-\eqref{eq:distr} as follows.
A distortion pair $(D_1, D_2)$ is achievable for the rate triplet $(R_0, R_1, R_2)$ if
\begin{align}
R_1&\geq I(S_1;U_1|Z) \label{rr1}\\
R_2&\geq I(S_2;U_2|Z) \label{rr2}\\
R_1+R_2&\geq I(S_1;U_1|Z)+I(S_2;U_2|Z) \label{rr3}\\
R_0+R_1+R_2&\geq H(Z) \!+ \!I(S_1;U_1|Z)+I(S_2;U_2|Z) \label{rr4}
\end{align}
and $\mathbb{E}[d_j(S_j, \hat{S}_j)]\leq D_j$ for $j=1,2$, for some distribution 
\begin{equation}
p(z, s_1, s_2) p(u_1|s_1, z) p(u_2|s_2, z) p(\hat{s}_1, \hat{s}_2| z, u_1, u_2).
\end{equation}

We next show that one can set $\hat{S}_j=f_j(Z, U_1, U_2)$ for $j=$~$1,2$ without loss of optimality. 
To do so, we write
\begin{align}
\mathbb{E}[d_1(S_1, \hat{S}_1)]&=\sum_{s_1, \hat{s}_1} p(s_1, \hat{s}_1) d_1(s_1, \hat{s}_1)\\
&=\sum_{s_1, \hat{s}_1, \hat{s}_2, z, u_1, u_2} \hspace{-0.4cm}p( \hat{s}_1, \hat{s}_2|z, u_1, u_2, s_1) p(z, u_1, u_2, s_1) d_1(s_1, \hat{s}_1)
\\
&=\sum_{s_1, \hat{s}_1, \hat{s}_2, z, u_1, u_2}  p( \hat{s}_1, \hat{s}_2|z, u_1, u_2) p(z, u_1, u_2, s_1) d_1(s_1, \hat{s}_1)
\\
&=\sum_{z, u_1, u_2} \sum_{\hat{s}_1}  \sum_{s_1}   p( \hat{s}_1|z, u_1, u_2) p(z, u_1, u_2, s_1) d_1(s_1, \hat{s}_1) \\
&\geq \sum_{z, u_1, u_2, s_1} p(z, u_1, u_2, s_1) d_1(s_1, f_1(z, u_1, u_2)) \label{eq:f}\\
&=\mathbb{E}[d_1(S_1, f_1(Z, U_1, U_2))] \label{eqson93}
\end{align}
where we define a function $f_1: \mathcal{Z}\times \mathcal{U}_1\times \mathcal{U}_2\rightarrow \hat{\mathcal{S}}_1$ in \eqref{eq:f} such that,
\begin{equation}
f_1(z, u_1, u_2)=\arg \min_{\hat{s}_1} \sum_{s_1}  p(z, u_1, u_2, s_1) d_1(s_1, \hat{s}_1)
\end{equation}
and set $p( \hat{s}_1|z, u_1, u_2)=1$ for  $\hat{s}_1=f_1(z, u_1, u_2)$ and $p( \hat{s}_1|z, u_1, u_2)=0$ otherwise. 

A similar argument follows for $S_2$ by defining a function $f_2: \mathcal{Z}\times \mathcal{U}_1\times \mathcal{U}_2\rightarrow \hat{\mathcal{S}}_2$ leading to
\begin{equation} \label{functionfors2}
\mathbb{E}[d_2(S_2, \hat{S}_2)]\geq \mathbb{E}[d_2(S_2, f_2(Z, U_1, U_2))].
\end{equation}
Therefore, we can set $\hat{S}_j=f_j(Z, U_1, U_2)$ for $j=1,2$.

We next show for $j=1,2$ that whenever there exists a function $f_j(Z, U_1, U_2)$  such that 
\begin{equation}
\mathbb{E}[d_j(S_j, f_j(Z, U_1, U_2))]\leq D_j, 
\end{equation}
then there exists a function $g_j(Z, U_j)$ such that 
\begin{equation}
\mathbb{E}[d_j(S_j, g_j(Z, U_j))]\leq \!\mathbb{E}[d_j(S_j, f_j(Z, U_1, U_2))]\!\leq\! D_j. 
\end{equation}

We show this result along the lines of \cite{gastpar2004wyner}. 
Consider a function $f_1(Z, U_1, U_2)$ such that $\mathbb{E}[d_1(S_1, f_1(Z, U_1, U_2))]\leq D_1$.
From the law of iterated expectations, 
\begin{align}
\mathbb{E}[d_1(S_1, f_1(Z, U_1, U_2))] 
&= \mathbb{E}_{S_2, U_2, Z} [\mathbb{E}_{S_1, U_1|S_2, U_2, Z} [d_1(S_1, f_1( Z, U_1, U_2))]] \\
&=  \mathbb{E}_{S_2, U_2, Z} [\mathbb{E}_{S_1, U_1|Z} [d_1(S_1, f_1(Z, U_1, U_2))]]  \label{eq:LOIE}
\end{align}
\eqref{eq:LOIE} holds due to $U_1S_1-Z-U_2S_2$, see \eqref{markovcond2}-\eqref{markovcond3}. Define $
\phi: \mathcal{Z}\rightarrow \mathcal{U}_2$ such that 
\begin{equation}
\phi(z)\triangleq \arg \min_{u_2} \mathbb{E}_{S_1, U_1|Z=z} [d_1(S_1, f_1(z, U_1, u_2))].
\end{equation}
Then for each $Z=z$,
\begin{equation}
 \mathbb{E}_{S_2, U_2| Z=z} [\mathbb{E}_{S_1, U_1|Z=z} [d_1(S_1, f_1(z, U_1, U_2))]] \geq  \mathbb{E}_{S_1, U_1|Z=z} [d_1(S_1, f_1(z, U_1, \phi(z)))],
\end{equation}
and hence,
\begin{align}
\mathbb{E}[d_1(S_1, f_1(Z, U_1, U_2))] 
& = \mathbb{E}_{Z} [\mathbb{E}_{S_2, U_2| Z=z} [\mathbb{E}_{S_1, U_1|Z=z} [d_1(S_1, f_1(z, U_1, U_2))]]] \\
&\geq \mathbb{E}_{Z}   [\mathbb{E}_{S_1, U_1|Z=z} [d_1(S_1, f_1(z, U_1, \phi(z)))]] \\
&= \mathbb{E}_{S_1, U_1,Z} [d_1(S_1, f_1(Z, U_1, \phi(Z)))] \\
&= \mathbb{E}[d_1(S_1, g_1(Z, U_1))] \label{eqf1}
\end{align}
where $g_1(Z, U_1)=f_1(Z, U_1, \phi(Z))$. 

Following similar steps, for any $f_2(Z, U_1, U_2)$ that achieves $\mathbb{E}[d_2(S_2, f_2(Z, U_1, U_2))]\leq D_2$ we can find a function $g_2(Z, U_2)$ such that 
\begin{equation} \label{eqdist2}
\mathbb{E}[d_2(S_2, f_2(Z, U_1, U_2))]  \geq \mathbb{E}[d_2(S_2, g_2(Z, U_2))]. 
\end{equation}

Combining \eqref{eqson93}, \eqref{functionfors2}, \eqref{eqf1}, \eqref{eqdist2} with \eqref{func1} and \eqref{func2},  we can state the rate region in \eqref{rr1}-\eqref{rr4} as follows. 
A distortion pair $(D_1, D_2)$ is achievable for the rate triplet $(R_0, R_1, R_2)$ if
\begin{align}
R_1&\geq  R_{S_1|Z} (D_1) \label{rrn1}\\
R_2&\geq R_{S_2|Z} (D_2) \label{rrn2}\\
R_1+R_2&\geq R_{S_1|Z} (D_1)+R_{S_2|Z} (D_2)  \label{rrn3}\\
R_0+R_1+R_2&\geq H(Z) +  R_{S_1|Z} (D_1)+ R_{S_2|Z} (D_2)  \label{rrn4}
\end{align}
since for any $p(s_j, u_j, z)=p(u_j|s_j, z) p(s_j|z)p(z)$ and $g_j(z, u_j)$ with $\mathbb{E}[d_j(S_j, g_j(Z, U_j))]\leq D_j$,   
\begin{equation}
I(S_j;U_j|Z) \geq R_{S_j|Z} (D_j), \quad j=1,2, 
\end{equation}
where $R_{S_j|Z} (D_j)$ is defined in \eqref{func2}. This completes the source coding part.

Our channel coding is based on coding for a MAC with a common message \cite{slepian1973coding}, for which any triplet of rates $(R_0, R_1, R_2)$ is achievable if
\begin{align}
R_1 &\leq I(X_1;Y|X_2, W) \label{common1MAC}\\
R_2 &\leq I(X_2;Y|X_1, W)\label{common2MAC}\\
R_1+R_2&\leq I(X_1, X_2;Y| W)\label{common3MAC}\\
R_0+R_1+R_2 &\leq I(X_1, X_2;Y) \label{common4MAC}
\end{align}
for some $p(x_1, x_2, y, w)=p(y|x_1, x_2) p(x_1|w) p(x_2|w)p(w)$.

\subsection{Converse} 
Our proof is along the lines of \cite{shamai1998systematic} and  \cite{gunduz2007correlated}. 
Suppose there exist encoding functions $e_j: \mathcal{S}_j^n\times \mathcal{Z}^n\rightarrow \mathcal{X}_j^n$ for $j=1,2$, decoding functions $g_j: \mathcal{Y}^n\rightarrow \hat{\mathcal{S}}_j^n$ for $j=1,2$ and $g_0: \mathcal{Y}^n\rightarrow \hat{Z}^n$ 
such that $\frac{1}{n}\sum_{i=1}^n E[d_j(S_{ji}, \hat{\mathcal{S}}_{ji} )]\leq D_j+\epsilon$ for $j=1,2$ and $P(Z^n\neq \hat{Z}^n)\leq P_e$ where $\epsilon \rightarrow 0$, $P_e\rightarrow 0$ as $n\rightarrow \infty$.

Define $U_{ji}=(Y^n, S_j^{i-1}, Z_{i}^c)$ for $j=1,2$ where $Z_{i}^c=$~$(Z_{1}, \ldots, Z_{(i-1)}, Z_{(i+1)},  \ldots, Z_{n})$. Then, 
\begin{align}
\frac{1}{n} I(X_1^n; Y^n|X_2^n, Z^n)
&= \frac{1}{n} (H(Y^n|X_2^n, Z^n)- H(Y^n|X_1^n, X_2^n, Z^n, S_1^n))  \label{eq888n2}\\
&\geq \frac{1}{n} (H(Y^n|X_2^n, Z^n)- H(Y^n|X_2^n, Z^n, S_1^n))  \label{eq888n2n1}\\
&= \frac{1}{n} I(S_1^n; Y^n, X_2^n|Z^n) \label{eq888n5} \\
&\geq \frac{1}{n} I(S_1^n; Y^n|Z^n)  = \frac{1}{n}  \sum_{i=1}^n I(S_{1i}; Y^n|S_{1}^{i-1}, Z^n) \label{eq888n6} \\
&=\frac{1}{n} \sum_{i=1}^n I(S_{1i}; U_{1i}|Z_{i}) \label{eq888n7} \\
&\geq \frac{1}{n} \sum_{i=1}^n R_{S_1|Z}(\mathcal{E} (S_{1i}|U_{1i}, Z_{i})) \label{eq888n8} \\
&\geq \frac{1}{n} \sum_{i=1}^n R_{S_1|Z}(\mathcal{E} (S_{1i}|Y^n))  \label{eq888n9}  \\
&\geq \frac{1}{n} \sum_{i=1}^n R_{S_1|Z}(\mathbb{E} [d_1(S_{1i}, \hat{S}_{1i} )] ) \label{eq888n10} \\
&\geq  R_{S_1|Z}(D_1+\epsilon) \label{eq888n11} 
\end{align}
\eqref{eq888n2} is from $Y^n-X_1^n X_2^n-Z^n S_1^n$, \eqref{eq888n2n1} holds since conditioning cannot increase entropy, and \eqref{eq888n5} is from $I(S_1^n;X_2^n|Z^n)=0$ since $S_1^n-Z^n-X_2^n$ as follows. 
\begin{align}
p(x_2^n, s_1^n|z^n)&=\sum_{s_2^n} p(x_2^n, s_2^n, s_1^n|z^n) \\
&=\sum_{s_2^n} p(x_2^n| s_2^n, z^n) p(s_2^n| z^n)  p(s_1^n| z^n)  \label{neweq2}\\
&=p(x_2^n| z^n)  p(s_1^n| z^n)  
\end{align}
where \eqref{neweq2} holds since $X_2^n-S_2^nZ^n -S_1^n$ and $S_1^n-Z^n-S_2^n$.
Equation \eqref{eq888n7} is from the definition of $U_{1i}$ and the memoryless property of the sources; \eqref{eq888n8}  is from \eqref{func1} and \eqref{func2}; \eqref{eq888n9} is from the fact that conditioning cannot increase \eqref{func1}; \eqref{eq888n10} follows as $\hat{S}_{1i}$ is a function of $Y^n$ and \eqref{eq888n11} as $R_{S_1|Z}(D_1)$ is convex and monotone in $D_1$. 

By defining a discrete random variable $\tilde{Q}$ uniformly distributed over $\{1,\ldots, n\}$ independent of everything else, we find that
\begin{align}
\frac{1}{n} I(X_1^n; Y^n|X_2^n, Z^n)   
&\leq\frac{1}{n} \sum_{i=1}^n (H(Y_i|X_{2i}, Z^n) - H(Y_i|X_{1i}, X_{2i}, Z^n) )  \label{lemma2Ix1yn1}  \\
&=\frac{1}{n} \sum_{i=1}^n  I(X_{1i};Y_i|X_{2i}, \tilde{Q}=i, Z^n)  \\
&=I(X_{1\tilde{Q}};Y_{\tilde{Q}}|X_{2\tilde{Q}}, \tilde{Q}, Z^n)  \\
&= I(X_1;Y|X_2, W) \label{lemma2Ix1yn5}
\end{align}
where we let $X_1=X_{1\tilde{Q}}$, $X_2=X_{2\tilde{Q}}$, $Y=Y_{\tilde{Q}}$ and $W=(\tilde{Q}, Z^n)$. 
Combining \eqref{lemma2Ix1yn5} with \eqref{eq888n2} and \eqref{eq888n11} leads to \eqref{common1}. 
We obtain \eqref{common2} by following similar steps. 
Next, we show that
\begin{align}
\frac{1}{n}I(X_1^n, X_2^n; Y^n|Z^n)
&= \frac{1}{n} (H(Y^n|Z^n)- H(Y^n|Z^n, X_1^n, X_2^n))\label{eq88802}\\
&= \frac{1}{n} (H(Y^n|Z^n)- H(Y^n|Z^n, X_1^n, X_2^n, S_1^n, S_2^n))\label{eq88802next}\\
&\geq  \frac{1}{n} (H(Y^n|Z^n)- H(Y^n|Z^n, S_1^n, S_2^n))\label{eq88802next2}\\
&= \frac{1}{n} ( I(S_1^n; Y^n|Z^n) + H(S_2^n|Z^n)- H(S_2^n| Y^n, S_1^n, Z^n))\label{eq8882} \\
&\geq \frac{1}{n} ( I(S_1^n; Y^n|Z^n) + H(S_2^n|Z^n) - H(S_2^n| Y^n, Z^n)) \label{eqson42}\\
&\geq R_{S_1|Z}(D_1+\epsilon) + R_{S_2|Z}(D_2+\epsilon)   \label{88822}
\end{align} 
where \eqref{eq88802next} is from $Y^n-X_1^n X_2^n-S_1^n S_2^n Z^n $, \eqref{eq88802next2} is from the fact that conditioning cannot increase entropy, \eqref{eq8882} is from $S_2^n-Z^n-S_1^n$, \eqref{eqson42} is from conditioning cannot increase entropy, \eqref{88822} is from following the steps \eqref{eq888n6}-\eqref{eq888n11} twice, where the role of $S_1^n$ and $S_2^n $ are changed for the second term. 
Moreover, we have
\begin{align}
\frac{1}{n}I(X_1^n, X_2^n; Y^n | Z^n) 
&\leq \frac{1}{n}  \sum_{i=1}^n(H(Y_i|Z^n)- H(Y_i|X_{1i}, X_{2i}, Z^n))  \label{eqlemma2sumrate1}  \\
&= \frac{1}{n}  \sum_{i=1}^nI(X_{1i}, X_{2i}; Y_i|\tilde{Q}=i, Z^n) \\
&\leq I(X_{1\tilde{Q}}, X_{2\tilde{Q}}; Y_{\tilde{Q}}| \tilde{Q}, Z^n)\\
&\leq I(X_{1}, X_{2}; Y| W) \label{eqlemma2sumrate4}
\end{align} 
where $X_1=X_{1\tilde{Q}}$, $X_2=X_{2\tilde{Q}}$, $Y=Y_{\tilde{Q}}$ and $W=(\tilde{Q}, Z^n)$.  
Combining \eqref{eqlemma2sumrate4} with \eqref{eq88802} and \eqref{88822} leads to \eqref{common3}.
We lastly show that
\begin{align}
\frac{1}{n}I(X_1^n, X_2^n; Y^n)
&\geq \frac{1}{n} I(S_1^n, S_2^n, Z^n; Y^n) \label{eq8880}\\
&= \frac{1}{n} (I(Z^n; Y^n) + I(S_1^n; Y^n|Z^n) + H(S_2^n|Z^n)- H(S_2^n| Y^n, S_1^n, Z^n))\label{eq888} \\
&\geq \frac{1}{n} (I(Z^n; Y^n) + I(S_1^n; Y^n|Z^n) + H(S_2^n|Z^n) - H(S_2^n| Y^n, Z^n)) \label{eqson4}\\
&\geq  \frac{1}{n} (H(Z^n)  + I(S_1^n; Y^n|Z^n) + I(S_2^n;Y^n | Z^n)-n\delta(P_e)) \label{8881}\\
&\geq H(Z)  \!+\! R_{S_1|Z}(D_1\!+\!\epsilon) \!+\! R_{S_2|Z}(D_2\!+\!\epsilon)  - \delta(P_e) \label{8882}
\end{align} 
where \eqref{eq8880} is from $Y^n-X_1^n X_2^n-S_1^n S_2^n Z^n $, \eqref{eq888} is from $S_2^n-Z^n-S_1^n$, \eqref{eqson4} is from the fact that conditioning cannot increase entropy, 
\eqref{8881} is from Fano's inequality combined with the data processing inequality, i.e.,
\begin{equation}
H(Z^n|Y^n) \leq H(Z^n|\hat{Z}^n) \leq n\delta(P_e) \label{eq888last}
\end{equation}
where $\delta(P_e)\rightarrow 0$ as $P_e\rightarrow 0$ \cite{cover2012elements}. Equation \eqref{8882} is from the memoryless property of $Z^n$ and from following \eqref{eq888n6}-\eqref{eq888n11} twice, the second one is when the role of $S_1^n$ is replaced with $S_2^n $. 

Lastly, using random variable $\tilde{Q}$ that has been defined uniformly over $\{1, \ldots, n\}$ and independent of everything else, we derive the following. 
\begin{align}
\frac{1}{n}I(X_1^n, X_2^n; Y^n)  
&\leq \frac{1}{n}  \sum_{i=1}^n(H(Y_i)\!-\! H(Y_i|X_{1i}, X_{2i})) \label{eq888beginnew} \\
&= \frac{1}{n}  \sum_{i=1}^nI(X_{1i}, X_{2i}; Y_i|\tilde{Q}=i) \\
&\leq I(X_{1\tilde{Q}}, X_{2\tilde{Q}}; Y_{\tilde{Q}}| \tilde{Q})\\
&= I(X_{1}, X_{2}; Y| \tilde{Q}) \label{eq8871}\\
&\leq  H(Y) - H(Y|X_{1}, X_{2}) \\
&= I(X_{1}, X_{2}; Y) \label{eqcommon4last}
\end{align} 
where $X_1=X_{1\tilde{Q}}$, $X_2=X_{2\tilde{Q}}$, $Y=Y_{\tilde{Q}}$. 
Combining \eqref{eq8880}, \eqref{8882}, \eqref{eq888beginnew}, and \eqref{eqcommon4last} leads to \eqref{common4}.

In order to complete our proof, we demonstrate that $p(x_1, x_2|w)=p(x_1|w) p(x_2|w)$ for  $w=(i, z^n)$. To this end, we show that
\begin{align}
P(X_1=x_1, X_2=x_2| W=w) 
&=P(X_{1i}=x_1, X_{2i}=x_2| \tilde{Q}=i, Z^n=z^n) \\
&=P(X_{1i}\!=\!x_1| \tilde{Q}\!=\!i, Z^n\!=\!z^n)   P(X_{2i}\!=\!x_2| \tilde{Q}\!=\!i, Z^n\!=\!z^n)  \label{distx1x2q2}\\ 
&=P(X_{1}=x_1| W=w)  P(X_{2}=x_2| W=w)  
\end{align}
where \eqref{distx1x2q2} holds since $X_{1i}-Z^n-X_{2i}$ for $i=1,\ldots, n$ as follows.
\vspace{-0.1cm}\begin{align}
p(x_1^n, x_2^n|z^n)
&=\sum_{s_1^n, s_2^n} p(x_1^n, x_2^n, s_1^n, s_2^n|z^n)  \label{eq:indept00} \\
&=\sum_{s_1^n, s_2^n} p(x_1^n| s_1^n, z^n) p(x_2^n| s_2^n, z^n) p(s_1^n| z^n)  p(s_2^n| z^n) \label{eqmarkovdist}\\
&= p(x_1^n|  z^n) p(x_2^n| z^n)  \label{eq:indept}
\end{align} 
where \eqref{eqmarkovdist} is from $X_1^n-S_1^n Z^n -S_2^nX_2^n$ and $X_2^n-S_2^n Z^n -S_1^n$ as well as $S_1^n - Z^n -S_2^n$.  
From \eqref{eq:indept}, we observe that  $X_{1}^n-Z^n-X_{2}^n$, which implies $X_{1i}-Z^n-X_{2i}$.

\section{Proof of Theorem~\ref{lemma2}}\label{appendixA}

\subsection{Achievability}
The source coding part is based on lossy source coding at the two encoders conditioned on the side information $Z$ shared between the encoder and decoder  \cite{gray1972conditional}, after which the conditional rate distortion functions given in \eqref{func2} can be achieved for $S_1$ and $S_2$, respectively. Channel coding part is based on coding for a classical MAC  with independent channel inputs \cite{cover2012elements}.

\subsection{Converse}
Suppose there exist encoding functions $e_j: \mathcal{S}_j^n\times \mathcal{Z}^n\rightarrow \mathcal{X}_j^n$, $j=1,2$, and decoding functions $g_j: \mathcal{Y}^n\times \mathcal{Z}^n\rightarrow \hat{\mathcal{S}}_j^n$ such that $\frac{1}{n}\sum_{i=1}^n E[d_j(S_{ji}, \hat{\mathcal{S}}_{ji} )]\leq D_j+\epsilon$, where $\epsilon \rightarrow 0$ as $n\rightarrow \infty$.  
Then, 
\begin{align}
\frac{1}{n} I(X_1^n;Y^n|X_2^n, Z^n) 
&\geq \frac{1}{n} I(S_1^n; Y^n|X_2^n, Z^n) \label{eq2}\\
&= \frac{1}{n} I(S_1^n; Y^n, X_2^n| Z^n) \label{eq3}\\
&\geq \frac{1}{n} I(S_1^n; Y^n| Z^n) \label{eq4}\\
&= \frac{1}{n} H(S_1^n| Z^n)- H(S_1^n| Y^n, Z^n, \hat{S}_1^n) \label{eq5new}\\
&\geq \frac{1}{n} H(S_1^n| Z^n)- H(S_1^n| Z^n, \hat{S}_1^n) \label{eq6new}\\
&\geq \frac{1}{n} \sum_{i=1}^n( H(S_{1i}| Z_i)- H(S_{1i}| Z_i, \hat{S}_{1i})) \label{eq8new}\\
&\geq \frac{1}{n} \sum_{i=1}^n I(S_{1i};\hat{S}_{1i}| Z_i) \label{eq9new}\\
&\geq \frac{1}{n} \sum_{i=1}^n R_{S_1|Z} (E[d_1(S_{1i}, \hat{S}_{1i})]) \label{eq9}\\
&\geq  R_{S_1|Z} (D_1+\epsilon) \label{eq10}
\end{align}
\eqref{eq2} is from $Y^n-X_1^nX_2^n-S_1^nZ^n$ and conditioning cannot increase entropy, and \eqref{eq3} is from $X_2^n-Z^n-S_1^n$ which holds since
\begin{equation}
p(x_2^n, s_1^n|z^n)\!=\!\sum_{s_2^n} p(x_2^n, s_1^n, s_2^n|z^n) 
\!=\!\sum_{s_2^n} p(x_2^n| s_2^n,z^n) p(s_1^n|z^n) p(s_2^n|z^n)\!=\!p(x_2^n|z^n) p(s_1^n|z^n) \! \label{eqstar}
\end{equation}
from $X_2^n-S_2^nZ^n-S_1^n$ and  $S_1^n-Z^n-S_2^n$. Equation 
\eqref{eq4} is due to the nonnegativity of mutual information; \eqref{eq5new} follows from $\hat{S}_1^n=g_1(Y^n, Z^n)$; \eqref{eq6new} holds since conditioning cannot increase entropy; 
\eqref{eq8new} is from the memoryless property of the sources and the side information as well as the chain rule and the fact that conditioning cannot increase entropy; \eqref{eq10} holds as $R_{S_1|Z}(D_1)$ is convex and monotone in $D_1$.

By defining a discrete uniform random variable $\tilde{Q}$ over $\{1,\ldots, n\}$ independent of everything else, and following steps \eqref{lemma2Ix1yn1}-\eqref{lemma2Ix1yn5} by $W=(\tilde{Q}, Z^n)$ replaced with $Q=(\tilde{Q}, Z^n)$, we find that 
\begin{equation}
\frac{1}{n} I(X_1^n; Y^n|X_2^n, Z^n)\leq I(X_1;Y|X_2, Q) \label{eqbeforelast}
\end{equation} 
where $X_1=X_{1\tilde{Q}}$, $X_2=X_{2\tilde{Q}}$, $Y=Y_{\tilde{Q}}$.
Combining \eqref{eq2}, \eqref{eq10}, and \eqref{eqbeforelast} yields \eqref{lemm2cond1}. Following similar steps we obtain  \eqref{lemm2cond2},
\begin{equation}\label{eqsource2}
R_{S_2|Z} (D_2+\epsilon) \leq I(X_2;Y|X_1, Q).
\end{equation}
Lastly, we have
\begin{align}
\frac{1}{n} I(X_1^n, X_2^n;Y^n|Z^n)
&=\frac{1}{n} I (X_1^n;Y^n|X_2^n, Z^n)+ \frac{1}{n}I(X_2^n;Y^n|Z^n) \label{eqsumrate2}\\
&\geq  R_{S_1|Z} (D_1+\epsilon) +\frac{1}{n} I(S_2^n;Y^n|Z^n)\label{eqsumrate3}\\
&\geq R_{S_1|Z} (D_1+\epsilon) +  R_{S_2|Z} (D_2+\epsilon)\label{eqsumrate21}
\end{align}
where the first term in \eqref{eqsumrate3} is from \eqref{eq2}-\eqref{eq10}, and \eqref{eqsumrate21} follows similarly to \eqref{eq4}-\eqref{eq10}. To obtain the second term in \eqref{eqsumrate3}, we first show that $Y^n-Z^nX_2^n-S_2^n$:
\begin{align}
p(y^n, s_2^n|z^n, x_2^n) 
&= p(s_2^n|z^n, x_2^n) p(y^n|s_2^n, z^n, x_2^n)  \label{Markov1}\\
&=p(s_2^n|z^n, x_2^n) \sum_{s_1^n, x_1^n} p(y^n|x_1^n, x_2^n) p(x_1^n|s_1^n, z^n) p(s_1^n|z^n)\label{Markov4} \\
& =p(s_2^n|z^n, x_2^n) \sum_{ x_1^n} p(y^n|x_1^n, x_2^n) p(x_1^n| z^n) \label{Markov5} 
\end{align}
\eqref{Markov4} is from $Y^n-X_1^nX_2^n- S_1^n S_2^n Z^n$ and $X_1^n-S_1^nZ^n-S_2^nX_2^n$ as well as $S_1^n-Z^n-S_2^nX_2^n$, which holds since
\begin{equation}
p(s_1^n, s_2^n, x_2^n|z^n) =p(x_2^n| s_2^n, z^n) p( s_2^n| z^n)p( s_1^n| z^n) =p(x_2^n, s_2^n| z^n) p( s_1^n| z^n), 
\end{equation}
due to $X_2^n-S_2^nZ^n-S_1^n$ and $S_1^n-Z^n-S_2^n$. 
Note that 
\begin{align}
p(y^n|z^n, x_2^n)
&= \sum_{s_1^n, x_1^n} p(y^n|x_1^n, x_2^n) p(x_1^n| s_1^n,z^n)p(s_1^n|z^n)= \sum_{x_1^n} p(y^n|x_1^n, x_2^n) p(x_1^n|z^n), \label{eq:markovbeforelast}
\end{align}
as $X_1^n-S_1^nZ^n-X_2^n$ and $S_1^n-Z^n-X_2^n$ holds since  $S_1^n-Z^n-S_2^nX_2^n$.
From \eqref{eq:markovbeforelast} and \eqref{Markov5}, 
\vspace{-0.2cm}\begin{equation}
p(y^n, s_2^n|z^n, x_2^n)=p(s_2^n|z^n, x_2^n)p(y^n|z^n, x_2^n),
\vspace{-0.2cm}\end{equation} 
and hence, $Y^n-Z^nX_2^n-S_2^n$. 
Then, we use the following in \eqref{eqsumrate2},
\begin{align}
I(X_2^n;Y^n|Z^n)&=H(Y^n|Z^n)-H(Y^n|X_2^n, Z^n, S_2^n) \label{eqMarkovcont2} \\
&\geq H(Y^n|Z^n)-H(Y^n|Z^n, S_2^n)  = I(S_2^n;Y^n|Z^n), \label{eqMarkovcont3}
\end{align}
where \eqref{eqMarkovcont2} is from $Y^n-Z^nX_2^n-S_2^n$, and \eqref{eqMarkovcont3} holds since conditioning cannot increase entropy, which leads to the second term in \eqref{eqsumrate3}.

Then, by replacing $W=(\tilde{Q}, Z^n)$ with $Q=(\tilde{Q}, Z^n)$ in  \eqref{eqlemma2sumrate1}-\eqref{eqlemma2sumrate4}, we can show by following the same steps that, 
\begin{equation}
\frac{1}{n}I(X_1^n, X_2^n; Y^n | Z^n) \leq I(X_{1}, X_{2}; Y| Q)  \label{eqlemma2sumrateQ}
\end{equation} 

Combining \eqref{eqsumrate2}, \eqref{eqsumrate21} and  \eqref{eqlemma2sumrateQ} recovers \eqref{lemm2cond3}. 
Lastly, we show  $p(x_1, x_2|q)=p(x_1|q) p(x_2|q)$ along the lines of  \cite{gunduz2009source}. For $q=(i, z^n)$, 
\begin{align}
P(X_1=x_1, X_2=x_2| Q=q)
&=P(X_{1i}=x_1, X_{2i}=x_2| \tilde{Q}=i, Z^n=z^n) \label{distx1x2q0}\\
&=P(X_{1i}\!=\!x_1| \tilde{Q}\!=\!i, Z^n\!=\!z^n)   P(X_{2i}\!=\!x_2| \tilde{Q}\!=\!i, Z^n\!=\!z^n)  \label{distx1x2qn}\\ 
&=P(X_{1}=x_1| Q=q)  P(X_{2}=x_2| Q=q)  \label{distx1x2q1}
\end{align}
where \eqref{distx1x2qn} holds since $X_{1i}-Z^n-X_{2i}$ for $i=1,\ldots, n$.


\section{Proof of Proposition~\ref{propbounds1}}\label{appendixD}

Let $\rho = 0.5$ and $P=2$.   
Partition the set of all distortion pairs $(D_1, D_2)$ for $0\leq D_1, D_2\leq 1$ as in Fig.~\ref{regions2}.  
First, consider $D_1=0.145$. For this case, one can observe that \eqref{LT} is satisfied with equality when $D_2 = 0.7476$, by noting that $(D_1, D_2)\in \mathcal{D}$ for $(D_1, D_2) = (0.145,  0.7476)$ and solving the resulting equation. Accordingly, for all distortion pairs $(0.145, D_2)$ with $0.7476\leq D_2\leq 1$, the necessary condition from \eqref{LT} is satisfied. 

Consider now the necessary conditions from Corollary~\ref{cor6} given in \eqref{eq3GGY1}-\eqref{eq3GGY2} along with the distortion pair $(D_1, D_2) = (0.145, 1)$,
\begin{align} 
 \frac{1}{2}\log \left (\frac{1-\rho}{D_1}\right )  &\leq \frac{1}{2} \log ( 1 + \beta_1 P + \beta_2 P ) \label{eq3GGY12}\\
 \frac{1}{2}\log \left (\frac{1}{D_1}\right)  &\leq \frac{1}{2} \log ( 1 + 2P + 2P \sqrt{(1-\beta_1)(1-\beta_2)} ), \label{eq3GGY22}
\end{align}
which follows from $R_{S_2|Z} (D_2) = 0$ when $D_2=1\geq 1-\rho$. By rearranging the terms in \eqref{eq3GGY12}, 
\begin{equation}\label{rearr}
\beta_1\geq \frac{\left(\frac{1-\rho}{D_1}\right)-1}{P} -\beta_2
\end{equation}
from which, by combining with \eqref{eq3GGY22}, we have the condition
\begin{equation}\label{cond12345}
\left(1-\left(\frac{\frac{1-\rho}{D_1}-1}{P} -\beta_2\right)\right)(1-\beta_2) \geq(1-\beta_1)  (1-\beta_2)\geq \left(\frac{ \frac{1}{D_1}-1-2P}{2P}\right)^2, 
\end{equation}
leading to
\begin{equation}\label{parabola}
-\beta_2^2 + \frac{ \frac{1-\rho}{D_1}-1}{P} \beta_2 + 1- \frac{ \frac{1-\rho}{D_1}-1}{P} - \Bigg(\frac{\frac{1}{D_1}-1-2P}{2P}\Bigg)^2 \geq 0.
\end{equation}
By substituting $D_1=0.145$, $\rho =0.5$, and $P=2$, we find that the left hand side (LHS) of \eqref{parabola} is a concave quadratic polynomial whose maximum value is $-0.0743$, attained when $\beta_2 = \frac{ \frac{1-\rho}{D_1}-1}{2P}= 0.6121$. Hence, \eqref{parabola} is not satisfied for any $0\leq\beta_2\leq 1$, and no distortion pair $(0.145, D_2)$ for which $0\leq D_2\leq 1$ is achievable according to conditions \eqref{eq3GGY1}-\eqref{eq3GGY2}.

Lastly, consider the necessary conditions \eqref{eq3LW1}-\eqref{eq3LW2}. Consider the distortion pair $(D_1,D_2) = (0.145, 0.7476)$. Observe that $(0.145, 0.7476)\in\mathcal{D}$, as a result, \eqref{eq3LW1}-\eqref{eq3LW2} can be written as
\begin{align} 
\frac{1}{2}\log  \frac{(1-\rho)^2}{D_1 D_2 - \left(\rho - \sqrt{(1-D_1)(1-D_2)}\right)^2}  &\leq \frac{1}{2} \log ( 1 + \beta_1 P + \beta_2 P ) \label{eq3GGY12LW}\\
\frac{1}{2}\log  \frac{1-\rho^2}{D_1 D_2 - \left(\rho - \sqrt{(1-D_1)(1-D_2)}\right)^2} &\leq \frac{1}{2} \log ( 1 + 2P + 2P \sqrt{(1-\beta_1)(1-\beta_2)} ). \label{eq3GGY22LW}
\end{align}
Define $\alpha \triangleq \frac{(1-\rho)^2}{D_1 D_2 - \left(\rho - \sqrt{(1-D_1)(1-D_2)}\right)^2}$, and set $\beta_1 = \frac{\alpha-1}{P}-\beta_2$, which satisfies \eqref{eq3GGY12LW}. Then, \eqref{eq3GGY22LW} can be expressed as
\begin{equation}\label{pol3}
-\beta_2^2 + \frac{\alpha-1}{P}\beta_2 + 1- \frac{\alpha-1}{P} -\theta \geq 0,
\end{equation}
where $\theta \triangleq \left(\frac{(1-\rho^2)/(2P)}{D_1 D_2 - \left(\rho - \sqrt{(1-D_1)(1-D_2)}\right)^2} -\frac{1}{2P}-1\right)^2$. The LHS of \eqref{pol3} is a concave polynomial whose maximum value is $0.1945$, attained when $\beta_2 = \frac{\alpha-1}{2P} = 0.3333$, which satisfies \eqref{pol3}. The corresponding $\beta_1$ can  be computed from $\beta_1 = \frac{\alpha-1}{P}-\beta_2=\frac{\alpha-1}{2P} = 0.3333$. Hence, for all distortion pairs  $ (0.145, D_2)$ with  $0.7476\leq D_2\leq 1$, necessary conditions from  \eqref{eq3LW1}-\eqref{eq3LW2} are satisfied. 
Accordingly, we conclude that there exist distortion pairs $(D_1, D_2)$ in regions $\mathcal{G}$ and $\mathcal{D}$ that satisfy the conditions \eqref{LT} and  \eqref{eq3LW1}-\eqref{eq3LW2} but not \eqref{eq3GGY1}-\eqref{eq3GGY2}.

Next, consider $D_1=0.16$.  For this case, \eqref{LT} holds with equality when $D_2 =  0.6696$, by noting that $(0.16, D_2)\in \mathcal{B}$ for $(D_1, D_2) = (0.16,  0.702)$ and solving the resulting equation. The necessary condition from \eqref{LT} is then satisfied for all distortion pairs $(0.16, D_2)$ such that $0.6696\leq D_2\leq 1$. 

Consider next the conditions from \eqref{eq3GGY1}-\eqref{eq3GGY2} for $(D_1, D_2)\in \mathcal{B}$,
\begin{align} 
 \frac{1}{2}\log \left (\frac{1-\rho}{D_1}\right )  &\leq \frac{1}{2} \log ( 1 + \beta_1 P + \beta_2 P ) \\
 \frac{1}{2}\log \left (\frac{1-\rho^2}{D_1D_2}\right)  &\leq \frac{1}{2} \log ( 1 + 2P + 2P \sqrt{(1-\beta_1)(1-\beta_2)} ),
\end{align}
from which, as in \eqref{cond12345}, we can obtain the condition
\begin{equation}\label{cond123456}
\left(1-\left(\frac{\frac{1-\rho}{D_1}-1}{P} -\beta_2\right)\right)(1-\beta_2) \geq(1-\beta_1)  (1-\beta_2)\geq \left(\frac{ \frac{1-\rho^2}{D_1D_2}-1-2P}{2P}\right)^2, 
\end{equation}
and
\begin{equation}\label{parabola2}
-\beta_2^2 + \frac{ \frac{1-\rho}{D_1}-1}{P} \beta_2 + 1- \frac{ \frac{1-\rho}{D_1}-1}{P} - \Bigg(\frac{\frac{1-\rho^2}{D_1D_2}-1-2P}{2P}\Bigg)^2 \geq 0.
\end{equation}
By substituting $D_1=0.16$, $\rho =0.5$, and $P=2$, we observe that the LHS of \eqref{parabola2} is a concave quadratic polynomial whose maximum value occurs at $\beta_2 = 0.5312$. 
We note that whenever
$D_2 < 0.6818$, the LHS of \eqref{parabola2} is negative for all $0\leq \beta_2\leq 1$, hence the necessary conditions from Corollary~\ref{cor6} cannot be satisfied. 

Consider next conditions \eqref{eq3LW1}-\eqref{eq3LW2} for $(D_1,D_2) = (0.16, 0.6696)$. Since $(0.16, 0.6696)\!\in\!\mathcal{B}$, one can write \eqref{eq3LW1}-\eqref{eq3LW2} as
\begin{align} 
\frac{1}{2}\log  \frac{(1-\rho)^2}{D_1D_2}  &\leq \frac{1}{2} \log ( 1 + \beta_1 P + \beta_2 P ) \label{eq3GGY12LW2}\\
\frac{1}{2}\log  \left(\frac{1-\rho^2}{D_1D_2} \right) &\leq \frac{1}{2} \log ( 1 + 2P + 2P \sqrt{(1-\beta_1)(1-\beta_2)} ). \label{eq3GGY22LW2}
\end{align}
Define $\bar{\alpha} \triangleq  \frac{(1-\rho)^2}{D_1D_2}$. By letting $\beta_1 = \frac{\bar{\alpha}-1}{P}-\beta_2$, which satisfies \eqref{eq3GGY12LW2}, we restate \eqref{eq3GGY22LW2} as
\begin{equation}\label{pol32}
-\beta_2^2 + \frac{\bar{\alpha}-1}{P}\beta_2 + 1- \frac{\bar{\alpha}-1}{P} -\bar{\theta} \geq 0, 
\end{equation}
where $\bar{\theta} \triangleq \left(\frac{1-\rho^2}{2PD_1D_2}  -\frac{1}{2P}-1\right)^2$. The LHS of \eqref{pol32} is a concave polynomial with a maximum value of $0.1943$, attained when $\beta_2 = \frac{\bar{\alpha}-1}{2P} = 0.3334$, which satisfies \eqref{pol32}. The corresponding $\beta_1$ is computed from $\beta_1 = \frac{\bar{\alpha}-1}{P}-\beta_2=\frac{\bar{\alpha}-1}{2P} = 0.3334$. Therefore, for all distortion pairs  $(0.16, D_2)$ such that $0.6696\leq D_2\leq 1$, necessary conditions in \eqref{eq3LW1}-\eqref{eq3LW2} are satisfied. 
Since $(0.16, D_2)\in\mathcal{B}$ for all $0.6696  \leq D_2\leq 0.6818$, we conclude that there exist distortion pairs in Region $B$ that satisfy the necessary conditions from \eqref{LT} and from \eqref{eq3LW1}-\eqref{eq3LW2}, but not from Corollary~\ref{cor6}.

Lastly, consider the conditions from \eqref{nc1}-\eqref{nc2}. 
Note that $D_1\leq D_2$ in regions $\mathcal{B}$, $\mathcal{D}$, and $\mathcal{G}$, therefore \eqref{nc1}-\eqref{nc2} can be stated as,
\begin{align}
\frac{1}{(1-\hat{\rho})^2} \ln \left( \frac{1-\rho^2}{D_1}\right)&\leq P \label{nc3}\\
 (\ln2) R_{S_1 S_2}(D_1, D_2)&\leq  P(1+\hat{\rho})\label{nc4}
\end{align} 
for some $0\leq \hat{\rho}\leq |\rho|$.  
Note that, if
\begin{align}
R_{S_1 S_2}(D_1, D_2)  \leq \frac{P}{\ln2}, \label{nc5}
\end{align}
then, \eqref{nc4} is satisfied for any $\hat{\rho}$. 
For Region $\mathcal{B}$, we find from \eqref{nc5} that,
\begin{align}
\frac{1}{2} \log \left( \frac{1-\rho^2}{D_1(1-\rho)}\right)  \leq \frac{P}{\ln2} \label{nc6}
\end{align}
by letting $D_2=1-\rho$, which then leads to
\begin{align} \label{select1}
D_1\geq (1+\rho) 2^{-\frac{2P}{\ln2}}. 
\end{align}
If \eqref{nc5} is satisfied for some $(D_1, D_2)$, it will be satisfied for all $(D_1, D'_2)$ such that $D'_2\geq D_2$. 
Accordingly, if $1-\rho\geq D_1\geq (1+\rho) 2^{-\frac{2P}{\ln2}}$, then condition \eqref{nc4} is satisfied for all $D_2\geq 1-\rho$, irrespective of $\hat{\rho}$. 
Next, consider condition \eqref{nc3} and select $\hat{\rho}=0$, from which we have
\vspace{-0.2cm}\begin{equation}
P\geq \ln \left( \frac{1-\rho^2}{D_1}\right),
\end{equation}
or equally
\begin{align}\label{select2}
D_1 \geq (1-\rho^2)e^{-P}.
\end{align}

For $P=2$ and $\rho=0.5$, \eqref{select1} becomes $D_1\geq 0.0275$ and \eqref{select2} becomes $D_1\geq 0.1015$. Hence, both \eqref{nc1} and \eqref{nc2} are satisfied when $D_1=0.145$ and $D_1=0.16$. 

These examples demonstrate that there exist distortion pairs in regions  $\mathcal{B}$, $\mathcal{D}$, and $\mathcal{G}$, and from symmetry, in regions  $\mathcal{C}$, $\mathcal{F}$, and $\mathcal{I}$, for which the necessary conditions from  Corollary~\ref{cor6} is tighter than both \eqref{LT}, \eqref{nc1}-\eqref{nc2}, and \eqref{eq3LW1}-\eqref{eq3LW2}.

Lastly, we compare Corollary~\ref{cor6} with the conditions from \eqref{eq3LW1}-\eqref{eq3LW2} by investigating the LHS of both conditions for various regions in Fig.~\ref{regions2}, as the region defined by the RHS of both \eqref{eq3GGY1}-\eqref{eq3GGY2} and \eqref{eq3LW1}-\eqref{eq3LW2} is the same. 

For $(D_1, D_2)\in \mathcal{A}$, we observe from \eqref{point1} and \eqref{RateGausLHS} that,
\begin{equation}
R_{S_1 S_2} (D_1, D_2) \!-\! C_W(S_1, S_2) 
\!=\! \frac{1}{2} \log \left( \frac{1-\rho^2}{D_1 D_2}\right )  \!-\! 
\frac{1}{2} \log  \left( \frac{1\!+\!\rho}{1\!-\!\rho}\right ) \!=\! R_{S_1|Z} (D_1) \!+\! R_{S_2|Z} (D_2), 
\end{equation} 
hence, in this region, Corollary~\ref{cor6} and the \eqref{eq3LW1}-\eqref{eq3LW2} bound are equivalent. 

For $(D_1, D_2)\in \mathcal{B}$, we find from \eqref{point1} and \eqref{RateGausLHS} that,
\begin{align}
R_{S_1 S_2} (D_1, D_2)  - C_W(S_1, S_2) 
&=\frac{1}{2}  \log \left( \frac{1-\rho^2}{D_1 D_2}\right ) - \frac{1}{2} \log  \left( \frac{1+\rho}{1-\rho}\right ) \\
&\leq \frac{1}{2} \log \frac{1-\rho}{D_1} =R_{S_1|Z} (D_1) + R_{S_2|Z} (D_2),
\end{align} 
since $D_1\leq 1-\rho$ and $D_2\geq 1-\rho$ for $(D_1, D_2)\in \mathcal{B}$.
Hence, in this region, Corollary~\ref{cor6} is at least as tight as \eqref{eq3LW1}-\eqref{eq3LW2}. By swapping the roles of $D_1$ and $D_2$, we can extend the same argument to Region $\mathcal{C}$ as well.

For $(D_1, D_2)\in \mathcal{D}$, we have from \eqref{point1} and \eqref{RateGausLHS} that, 
\begin{equation} \label{eqcompare2}
R_{S_1|Z} (D_1) + R_{S_2|Z} (D_2)  = \frac{1}{2} \log \frac{1-\rho}{D_1}, 
\end{equation} 
whereas
\begin{align}
R_{S_1 S_2} (D_1, D_2)\!-\! C_W(S_1, S_2)&\!=\!\frac{1}{2} \max \Bigg\{ \! \log  \frac{1\!-\!\rho}{1\!+\!\rho}, \log  \frac{(1\!-\!\rho)^2}{D_1 D_2 \!-\! \left(\rho \!-\! \sqrt{(1\!-\!D_1)(1\!-\!D_2)}\right)^2}\!\Bigg \} \label{obs}\\
&\!=\!\frac{1}{2}  \log  \frac{(1-\rho)^2}{D_1 + D_2 - (1+\rho^2) +  2\rho \sqrt{(1-D_1)(1-D_2)}}, 
\label{general1}
\end{align}  
where the last equation follows from
\begin{align}
{(2-D_1-D_2)}^2 - 4\rho^2 (1-D_1)(1-D_2) = (1-\rho^2)(2-D_1-D_2)^2 + \rho^2(D_1-D_2)^2\geq 0 \label{2d1d2}
\end{align}
and therefore, 
\begin{equation}
 D_1 + D_2 - (1+\rho^2)+ 2\rho \sqrt{(1-D_1)(1-D_2)}\leq 1-\rho^2 . 
\end{equation}
Then, by comparing \eqref{general1} with \eqref{eqcompare2}, we find that,  Corollary~\ref{cor6} provides necessary conditions at least as tight as \eqref{eq3LW1}-\eqref{eq3LW2} if
\begin{equation} 
\rho \in \left\{\rho: \tau - \sqrt{D_2-1+\tau^2} \leq \rho \leq \tau + \sqrt{D_2-1+\tau^2}, \quad D_2+\tau^2\geq 1\right\},\end{equation}
where
\begin{equation}
\tau = \frac{D_1}{2} + \sqrt{(1-D_1)(1-D_2)}.
\end{equation}
By symmetry, for region $(D_1, D_2)\in \mathcal{F}$, 
Corollary~\ref{cor6} is at least as tight as \eqref{eq3LW1}-\eqref{eq3LW2} if 
\begin{equation} 
\rho \in \left\{\rho: \lambda - \sqrt{D_1-1+\lambda^2} \leq \rho \leq \lambda + \sqrt{D_1-1+\lambda^2}, \quad D_1+\tau^2\geq 1  \right\}, \end{equation}
where
\begin{equation}
\lambda = \frac{D_2}{2} + \sqrt{(1-D_1)(1-D_2)}.
\end{equation}

For $(D_1, D_2)\in \mathcal{G}$, we observe from \eqref{point1} and \eqref{RateGausLHS} that,
\begin{align}
R_{S_1 S_2} (D_1, D_2)- C_W(S_1, S_2)&=
\frac{1}{2} \log \left ( \frac{1}{D_1}\right ) - \frac{1}{2} \log  \left( \frac{1+\rho}{1-\rho}\right ) \\
&\leq \frac{1}{2} \log \frac{1-\rho}{D_1}=R_{S_1|Z} (D_1) + R_{S_2|Z} (D_2).
\end{align} 
Therefore, Corollary~\ref{cor6} is again at least as tight as \eqref{eq3LW1}-\eqref{eq3LW2}. 
It follows by symmetry that Corollary~\ref{cor6} is at least as tight as \eqref{eq3LW1}-\eqref{eq3LW2} in Region $\mathcal{I}$ as well.

For $(D_1, D_2)\in \mathcal{H}$, we have from \eqref{point1} and \eqref{RateGausLHS} that,
\begin{align}
R_{S_1 S_2} (D_1, D_2)- C_W(S_1, S_2)&=
\frac{1}{2} 
\log \left( \frac{1}{\min(D_1, D_2)}\right ) 
 - \frac{1}{2} \log  \left( \frac{1+\rho}{1-\rho}\right ) \\
&=\frac{1}{2} 
\log  \frac{1-\rho}{\min(D_1, D_2) (1+\rho)} \\
&\leq 0 =R_{S_1|Z} (D_1) + R_{S_2|Z} (D_2)  \label{triv}
\end{align} 
since $\min(D_1, D_2)\geq 1-\rho$ when $(D_1, D_2)\in H$. 
From \eqref{triv}, conditions \eqref{eq3GGY1} and \eqref{eq3LW1} are both trivially satisfied in this region, and therefore Corollary~\ref{cor6} and the conditions from \eqref{eq3LW1}-\eqref{eq3LW2} are equivalent. Same conclusion follows for Region $\mathcal{J}$.

For region $(D_1, D_2)\in \mathcal{E}$, we have from \eqref{point1} and \eqref{RateGausLHS} that, 
\vspace{-0.2cm}\begin{equation}\label{eqcompare}
R_{S_1|Z} (D_1) + R_{S_2|Z} (D_2)  = 0,
\vspace{-0.2cm}\end{equation} 
hence, condition \eqref{eq3GGY1} is trivially satisfied, whereas $R_{S_1 S_2} (D_1, D_2)- C_W(S_1, S_2)$ is as given in \eqref{obs} and \eqref{general1}.

If $D_1=D_2$, we have from \eqref{obs} and $D_1\geq 1-\rho$ that,
\begin{align}
R_{S_1 S_2} (D_1, D_2)- C_W(S_1, S_2)&=\frac{1}{2} \max \left \{  \log   \frac{1-\rho}{1+\rho}, \log  \frac{(1-\rho)^2}{D_1^2 - \left(\rho - (1-D_1)\right)^2} \right \} \\
& \leq 0=R_{S_1|Z} (D_1) + R_{S_2|Z} (D_2),
\end{align} 
and \eqref{eq3LW1} is also trivially satisfied. Hence, for all $D_1=D_2$ in Region $\mathcal{E}$, Corollary~\ref{cor6} and the conditions from \eqref{eq3LW1}-\eqref{eq3LW2} are equivalent. 

We next consider the case when $\rho\leq 0.5$ for $(D_1, D_2)\in \mathcal{E}$. 
Without loss of generality, we assume that $D_1 \geq D_2$. Noting that $D_2\geq 1-\rho$, we have
\vspace{-0.1cm}\begin{align}
D_1 + D_2 - (1+\rho^2) +  2\rho \sqrt{(1-D_1)(1-D_2)} 
&\geq D_1 + D_2 - (1+\rho^2) +  2\rho (1-D_1) \\
& \geq D_2 (1-2\rho) + D_2 - (1-\rho)^2\\
& \geq  (1-\rho)^2 \label{compfirst}
\end{align}
from which, along with \eqref{eqcompare} and \eqref{general1}, we find that  
\vspace{-0.15cm}\begin{equation}
R_{S_1 S_2} (D_1, D_2)- C_W(S_1, S_2) \leq 0 = R_{S_1|Z} (D_1) + R_{S_2|Z} (D_2). \label{complast}
\vspace{-0.15cm}\end{equation} 
Therefore, for all $\rho\leq 0.5$, Corollary~\ref{cor6} and the conditions \eqref{eq3LW1}-\eqref{eq3LW2} are equivalent. By comparing \eqref{eqcompare} with \eqref{general1}, we can show that, Corollary~\ref{cor6} is equivalent to \eqref{eq3LW1}-\eqref{eq3LW2} if
\begin{equation} 
\rho \in \left\{\rho: \Delta - \sqrt{\frac{D_1+D_2}{2} -1+ \Delta^2} \leq \rho \leq \Delta + \sqrt{\frac{D_1+D_2}{2} -1+ \Delta^2}, \;\frac{D_1+D_2}{2} + \Delta^2 \geq 1  \right\} \end{equation} 
where $\Delta \triangleq \frac{1+\sqrt{(1-D_1)(1-D_2)}}{2}$. 
We therefore find that the necessary conditions from Corollary~\ref{cor6} are at least as tight as conditions \eqref{eq3LW1}-\eqref{eq3LW2} in all regions but $\mathcal{E}$, $\mathcal{D}$, and $\mathcal{F}$. 

\begin{remark}We note that Corollary~\ref{cor6} is not necessarily strictly tighter in any of these regions, since the necessary conditions involve also the RHS of \eqref{eq3GGY1}-\eqref{eq3GGY2} and \eqref{eq3LW1}-\eqref{eq3LW2}, which  can be used to claim the impossibility of achieving certain distortion pairs based on the relative value of the rate distortion functions with respect to the rate region characterized by the RHS. It is possible that, even though the LHS of Corollary~\ref{cor6} is lower than the LHS of \eqref{eq3LW1}-\eqref{eq3LW2}, either both or none of the necessary conditions may be satisfied, leading exactly to the same conclusion regarding the achievability of the corresponding distortion pair. 
\end{remark}

\section{Proof of Proposition~\ref{propbounds2}}\label{appendixE}
Consider $D_1=0.3$, $\rho=0.5$, and $P=1$. For this case, \eqref{LT} holds with equality when $D_2 =  0.625$, and $(0.3,  0.625)\in \mathcal{B}$. Accordingly, no distortion pair $(0.3,  D_2)$, with $0.5\leq D_2 < 0.625$, satisfies \eqref{LT}. The necessary conditions of Corollary~\ref{cor6} for $(D_1, D_2)\in \mathcal{B}$ are given by
\begin{align} 
 \frac{1}{2}\log \left (\frac{1-\rho}{D_1}\right )  &\leq \frac{1}{2} \log ( 1 + \beta_1 P + \beta_2 P ) \label{eq3GGY123}\\
 \frac{1}{2}\log \left (\frac{1-\rho^2}{D_1D_2}\right)  &\leq \frac{1}{2} \log ( 1 + 2P + 2P \sqrt{(1-\beta_1)(1-\beta_2)} ). \label{eq3GGY223}
\end{align}
By defining $\hat{\alpha} \triangleq  \frac{1-\rho}{D_1}$, and setting $\beta_1 = \frac{\hat{\alpha}-1}{P}-\beta_2$, which satisfies \eqref{eq3GGY123}, condition \eqref{eq3GGY223} becomes,
\begin{equation}\label{pol323}
-\beta_2^2 + \frac{\hat{\alpha}-1}{P}\beta_2 + 1- \frac{\hat{\alpha}-1}{P} -\hat{\theta} \geq 0,
\end{equation}
where $\hat{\theta} \triangleq \left(\frac{1-\rho^2}{2PD_1D_2}  -\frac{1}{2P}-1\right)^2$. The LHS of \eqref{pol32} is concave, and attains its maximum value at $\beta_2 = \frac{\hat{\alpha}-1}{2P}=0.3333$. The corresponding $\beta_1$ is computed from $\beta_1 = \frac{\hat{\alpha}-1}{P}-\beta_2=0.3333$.  From \eqref{pol323}, it can be shown that Corollary~\ref{cor6} is satisfied whenever $D_2\geq 0.5769$. Accordingly, for the distortion pairs $(0.3,D_2)$ with $0.5769\leq D_2 < 0.625$, the necessary conditions of Corollary~\ref{cor6} are satisfied whereas the bound in \eqref{LT} is not.

\section{Proof of Proposition~\ref{propbound3}}\label{appendixF}

Let $D_1=0.25$, $\alpha=0.2$, $P=0.9$, and $0\leq D_2\leq \frac{\alpha}{2(1-\alpha)}$. 
Consider initially the condition from \eqref{LTDSBS}. Let $D_2 = 0.003$ and observe that for this case $R_{S_1 S_2} (D_1, D_2)=1-h(D_2)$. Then, 
\begin{equation}
R_{S_1 S_2}( D_1, D_2)  =0.9705 \leq \frac{1}{2} \log (1 + 2P(1+\rho_{max})) = 0.978,
\end{equation}
hence  \eqref{LTDSBS} is satisfied for all $D_2\geq 0.003$. Next, consider the conditions from \eqref{LWDS11}-\eqref{LWDS12}. Let $D_2 = 0.003$ and $\beta_1 = \frac{2^{2(2h(\theta)-h(D_2)-h(\alpha))}-1}{P} - \beta_2$ and observe that \eqref{LWDS11} is satisfied. By rearranging \eqref{LWDS11}-\eqref{LWDS12}, we obtain
\begin{equation}
-\beta_2^2 \!+\!\frac{\left(2^{2(2h(\theta)-h(D_2)-h(\alpha))}-1\right)}{P} \beta_2 + 1 \!-\! \frac{\left(2^{2(2h(\theta)-h(D_2)-h(\alpha))}\!-\!1\right)}{P} - \left(\frac{2^{2(1-h(D_2))}-1}{2P} \!-\! 1\right)^2\geq 0
\end{equation}
whose LHS reaches its maximum value $0.2344$ at $\beta_2 = \frac{2^{2(2h(\theta)-h(D_2)-h(\alpha))}-1}{2P} = 0.2462$. Therefore, necessary conditions  \eqref{LWDS11}-\eqref{LWDS12} are satisfied for all $D_2\geq 0.003$.

Next, consider the necessary conditions in \eqref{LWDS21}-\eqref{LWDS22}. Similar to the previous case, let $D_2 = 0.003$ and $\beta_1 = \frac{2^{2\left(\frac{\alpha}{1-\alpha}h(\alpha)-h(D_2)\right)}-1}{P} - \beta_2$ which satisfies \eqref{LWDS21}. Rearrange \eqref{LWDS21}-\eqref{LWDS22} to obtain
\begin{equation}
-\beta_2^2 +\frac{\left(2^{2\left(\frac{\alpha}{1-\alpha}h(\alpha)-h(D_2)\right)}-1\right)}{P} \beta_2 + 1 - \frac{\left(2^{2\left(\frac{\alpha}{1-\alpha}h(\alpha)-h(D_2)\right)}-1\right)}{P} - \left(\frac{2^{2(1-h(D_2))}-1}{2P} - 1\right)^2\geq 0
\end{equation}
whose LHS reaches a maximum of $0.4242$ at $\beta_2 = \frac{2^{2\left(\frac{\alpha}{1-\alpha}h(\alpha)-h(D_2)\right)}-1}{2P} = 0.1294$. Hence, necessary conditions from \eqref{LWDS21}-\eqref{LWDS22} are satisfied for all $D_2\geq 0.003$. 

Lastly, consider the necessary conditions from Corollary~\ref{cor6} and let $D_2=0.003$. From \eqref{eq3GGY1}, we have $\beta_1\geq \frac{2^{2(R_{S_1|Z} (D_1)  + R_{S_2|Z} (D_2))}-1}{P}-\beta_2$, from which, by combining with \eqref{eq3GGY2}, we obtain 
\begin{align}
-\beta_2^2 +\frac{\left(2^{2(R_{S_1|Z} (D_1)  + R_{S_2|Z} (D_2))}-1\right)}{P}\beta_2 + 1 - &\frac{\left(2^{2(R_{S_1|Z} (D_1)  + R_{S_2|Z} (D_2))}-1\right)}{P}   
 \notag \\
&
\qquad - \left(\frac{2^{2R_{S_1 S_2} (D_1, D_2)}-1}{2P} -1\right)^2\geq 0
\end{align}
and observe that the polynomial on the LHS attains its maximum value $-0.0247$ at $\beta_2 = \frac{\left(2^{2(R_{S_1|Z} (D_1)  + R_{S_2|Z} (D_2))}-1\right)}{2P} = 0.4442$. Hence, for this example,  Corollary~\ref{cor6} cannot be satisfied for any $0\leq \beta_1, \beta_2\leq 1$. We therefore conclude that there exist distortion pairs for which the two necessary conditions are satisfied while Corollary~\ref{cor6} is not.

\bibliographystyle{IEEEtran}
\bibliography{IEEEabrv,ref}

\end{document}